\newtheorem{theorem}	 			{Theorem}[section]
\newtheorem{corollary}		[theorem]	{Corollary}
\newtheorem{prop}		[theorem]	{Proposition}
\newtheorem{definition}
[theorem]	{Definition}}
\newtheorem{remark}		[theorem]
{Remark}}
\newtheorem{example}		[theorem]
{Example}}
\theoremstyle{break}
{\theorembodyfont{\rmfamily} }
\newenvironment{proof}{\noindent {\em {Proof:}}}{$\blacksquare$\vskip
\belowdisplayskip}
\newcommand{\prob}[2][]{\text{\bf Pr}\ifthenelse{\not\equal{}{#1}}{_{#1}}{}\!\left[#2\right]}
\newcommand{\expect}[2][]{\text{\bf E}\ifthenelse{\not\equal{}{#1}}{_{#1}}{}\!\left[#2\right]}
\newcommand{\tfm}{(\allocs,\prices,\burns)}
\newcommand{\bid}{b}
\newcommand{\bids}{{\mathbf \bid}}
\newcommand{\bidt}[1][t]{{\bid_{#1}}}
\newcommand{\val}{v}
\newcommand{\vals}{{\mathbf \val}}
\newcommand{\valt}[1][t]{{\val_{#1}}}
\newcommand{\alloc}{x}
\newcommand{\allocs}{{\mathbf \alloc}}
\newcommand{\alloct}[1][t]{\alloc_{#1}}
\newcommand{\price}{p}
\newcommand{\prices}{{\mathbf \price}}
\newcommand{\pricet}[1][t]{\price_{#1}}
\newcommand{\burn}{q}
\newcommand{\burns}{{\mathbf \burn}}
\newcommand{\burnt}[1][t]{\burn_{#1}}
\newcommand{\nine}{*}
\newcommand{\history}{\mathbf{H}}
\title{Transaction Fee Mechanism Design\thanks{Date of initial
journal submission: August 19, 2021. A four-page summary of this paper
    appeared in ACM SIGecom Exchanges, June 2021.
A one-page abstract of this paper
    appeared in the Proceedings of the 
22nd Annual ACM Conference on Economics and Computation, July 2021.}}
\author{Tim Roughgarden\thanks{Department of Computer Science,
    Columbia University. 
The author's research at Columbia University is supported in part by
NSF awards CCF-2006737 and CNS-2212745. The author's work on a
related report for the public~\cite{eip1559full} was supported by the
Decentralization Foundation.
Since January 1, 2022, the author has also served as Head of
Research at a16z Crypto, a
venture capital firm with investments in blockchain protocols.
Email: \texttt{tim.roughgarden@gmail.com}.}}
\begin{document}

\maketitle

\begin{abstract}
Demand for blockchains such as Bitcoin and Ethereum is far larger than
supply, necessitating a mechanism that selects a subset of 
transactions to include ``on-chain'' from the pool of all pending
transactions.
This paper investigates the problem of designing a blockchain
transaction fee mechanism through the lens of mechanism design.
We introduce two new forms of incentive-compatibility that
capture some of the idiosyncrasies of the blockchain setting, one
(MMIC) that protects against  deviations by profit-maximizing miners
and one (OCA-proofness) that protects against off-chain collusion
between miners and users.

This study is immediately applicable to major change (made on August
5, 2021) to Ethereum's transaction fee mechanism, based on a proposal
called ``EIP-1559.''  Originally, Ethereum's transaction fee
mechanism was a first-price (pay-as-bid) auction.  EIP-1559 suggested
making several tightly coupled changes, including the introduction of
variable-size blocks, a history-dependent reserve price, and the
burning of a significant portion of the transaction fees.  We prove
that this new mechanism earns an impressive report card: it satisfies
the MMIC and OCA-proofness conditions, and is also dominant-strategy
incentive compatible (DSIC) except when there is a sudden demand
spike.  We also introduce an alternative design, the ``tipless
mechanism,'' which offers an incomparable slate of
incentive-compatibility guarantees---it is MMIC and DSIC, and
OCA-proof unless in the midst of a demand spike.
\end{abstract}

\section{Introduction}\label{s:intro}

Real estate on a major blockchain is a scarce resource.  For example,
Bitcoin~\cite{bitcoin} and Ethereum~\cite{ethereum}, the two biggest
blockchains by market
cap\footnote{\url{https://www.coingecko.com/}},
process roughly~5
and~15 transactions per second on average, respectively.
Demand for these blockchains is far larger, necessitating a mechanism
that selects a subset of transactions to include ``on-chain'' from the
pool of all submitted transactions.

Historically, most blockchain protocols have employed a pay-as-bid
transaction fee mechanism.  Every transaction is submitted with a bid
(in the blockchain's native currency), the miner of a block decides
which transactions should be included in it, and upon publication of
that block, the bid of each included transaction is transferred from
its creator to the miner.\footnote{``Miner'' is the common term for
  block producers in a proof-of-work blockchain protocol (in which
  block validity rests on the inclusion of a partial pre-image of a
  cryptographic hash function). At the time the work described in this
  paper was done (in 2020--2021), Ethereum was a proof-of-work
  blockchain protocol. On September 15, 2022, in an event now known as
  ``the Merge,'' the protocol upgraded to an alternative method of
  sybil-resistance called ``proof-of-stake.'' In a proof-of-stake
  blockchain protocol, block producers are usually called
  ``validators'' rather than ``miners.'' Because the model in this
  paper concerns the production of a single block and takes the
  identity of the block producer as fixed (agnostic to how it was
  selected), the Ethereum protocol's switch from proof-of-work to
  proof-of-stake does not change any of the incentive-compatibility
  results proved here.\label{foot:pos}}  We follow blockchain convention and refer to
this mechanism as a {\em first-price auction (FPA)}.

FPAs are natural enough and served for many years as the dominant
paradigm in blockchain protocols, but are they really the best we can
do?
Could a different transaction fee mechanism offer more compelling
incentive-compatibility properties?

The primary goal of this paper is to investigate 
these questions through the lens of mechanism design,
while taking into account the many idiosyncrasies of
the blockchain setting relative to more traditional applications of
the field.  For example:
\begin{enumerate}

\item [(1)] The miner of a block has dictatorial control over its contents,
  and in particular may deviate from the allocation rule intended by
  the protocol designer.

\item [(2)] The miner of a block can costlessly include fake transactions
  that are indistinguishable from real transactions.

\item [(3)] Payments should be computable from ``on-chain'' data,
which typically discloses no information about losing bids.

\item [(4)] Miners and users can easily collude off-chain to manipulate a
  transaction fee mechanism.

\end{enumerate}

The sequential and repeated nature of the blockchain setting also
offers some potential advantages to the mechanism designer (which are
not exploited by FPAs).  For example:
\begin{itemize}

\item [(5)] The choice of mechanism (such as a reserve price) for a given
  block could be informed by the (publicly visible) outcomes for
  previous blocks.

\item [(6)] Revenue from a block need not be transferred directly to the
  block's miner and could instead be redirected, for example to 
a foundation or to the
  miners of future blocks.

\end{itemize}
The key question is then: is there a transaction fee mechanism,
possibly taking advantage of points~(5) and~(6), that meets the
constraints imposed by~(1)--(4) while also decreasing the
strategic complexity relative to FPAs?

\subsection*{EIP-1559}

In addition to its basic scientific interest, the analysis of
transaction fee mechanisms is immediately applicable to scrutinizing
what is arguably the biggest change made to-date to the Ethereum
blockchain.  As background, a blockchain can change its own
specification through a ``hard fork,'' meaning a coordinated switch by
the nodes in its network to a new and backwards-incompatible version
of the protocol software.  At the time of this writing, the Ethereum
blockchain has had roughly fifteen hard forks since its genesis in
May~2015.  For example, the ``London fork'' began with block number
12865000, which was mined on August 5, 2021.  Each hard fork
implements a collection of Ethereum improvement protocols or ``EIPs''
that have been vetted by the community and approved for inclusion.

One of the EIPs implemented in the London fork is EIP-1559, a proposal
by Vitalik Buterin (Ethereum's
founder)~\cite{vb1559,1559spec} that suggested several
tightly coupled changes to Ethereum's transaction fee mechanism (which was
previously an FPA), including the introduction of variable-size blocks,
a history-dependent reserve price, and the burning of a significant
portion of the transaction fees.\footnote{The results of the new
  transaction fee mechanism can be tracked at
  \url{https://ultrasound.money/}.}
While Buterin did not provide a formal economic analysis of his proposed
design, in~\cite{vb_ec18} he outlined the motivation
behind the proposal:
\begin{quote}
Our goal is to
  discourage the development of complex miner strategies and complex
  transaction sender strategies in general, including both complex
  client-side calculations and economic modeling as well as various
  forms of collusion.
\end{quote}

Community discussion of EIP-1559 began in earnest in
early~2018\footnote{\url{https://ethereum-magicians.org/t/eip-1559-fee-market-change-for-eth-1-0-chain/2783}}
and
over time the proposal garnered a number of advocates and critics.
The polarization around this EIP was evident in the community
survey conducted by Tim Beiko.\footnote{\url{https://medium.com/ethereum-cat-herders/eip-1559-community-outreach-report-aa18be0666b5}}
``Difficulties analyzing the EIP'' ranked among the chief risks
pointed out by the survey respondents, citing ``the lack of a formal
specification or proof for the mechanism that people can independently
evaluate and critique.''
As part of the course of our work, we formalize the transaction fee
mechanism proposed in EIP-1559 and rigorously interrogate to what
extent it meets its stated design goals.  Anecdotal evidence suggests that a
preliminary report based on this work~\cite{eip1559full} contributed to the
understanding of and broader support for EIP-1559.\footnote{See, for
  example, \url{https://thedefiant.io/eip-1559-user-dev-changes/},
  \url{https://medium.com/centaur/centaur-library-how-eip-1559-will-lower-high-ethereum-gas-fees-a6a9b4f9583c},
  and \url{https://cryptonews.com/news/ethereum-s-hope-no-1559-what-it-does-and-what-it-doesn-t-do-11253.htm}.} 

We stress that while EIP-1559 provides important and timely motivation
for the present work, the discussion and results in this paper apply
equally well to many other blockchains, including Bitcoin.  The
Bitcoin community is famously hostile to major changes to the
Bitcoin protocol, however, so its transaction fee mechanism 
is likely to remain an FPA for the foreseeable future.
Some smaller blockchains have deployed variations of the
mechanism proposed in EIP-1559,
including Filecoin\footnote{See
  \url{https://filfox.info/en/stats/gas}.} and NEAR\footnote{See \url{https://near.org/papers/the-official-near-white-paper/}.}.

\subsection*{Paper Structure}

Section~\ref{s:prelim} of this paper describes our basic model, a
mechanism design setting in which the mechanism participants (creators
of transactions) compete for transaction inclusion in a block with
limited capacity.  The formalism follows in the mechanism design
tradition of allocation rules and payment rules, but with two twists.
First, to take into account point~(6) above, the payment rule
(describing transfers to a block's miner) is supplemented with a
burning rule (Definition~\ref{d:burning}) which indicates how much of
the block's revenue is burned (or otherwise redirected away from the
block's miner).  Second, in light of point~(3), payment
and burning rules are restricted to depend only on the bids of the
winning transactions.

Section~\ref{s:ic} spells out three forms of incentive-compatibility,
each guaranteeing robustness to a particular type of deviation from
the intended behavior.  Robustness to deviations from straightforward
bidding by transaction creators is captured by the familiar notion of
dominant-strategy incentive-compatibility (DSIC).  The other two
guarantees are specifically motivated by the blockchain setting and
are new to this paper.  First, we define incentive-compatibility for
myopic miners (Definition~\ref{def:mmic}), or MMIC, a condition stating that a
transaction fee mechanism should be robust to deviations by
profit-maximizing miners from the intended allocation rule (see
point~(1)) and also the injection of fake transactions (point~(2)).
Second, we define OCA-proofness (Definition~\ref{def:oca}), which
states that the mechanism should be robust to cartels of transaction
creators and miners colluding off-chain; more precisely, no off-chain
arrangement among members of such a cartel should be capable of Pareto
improving over a canonical on-chain outcome.  The rest of the paper
investigates to what extent different transaction fee mechanisms enjoy
these three strategic robustness properties.

Section~\ref{s:1559} provides a formal description of the transaction
fee mechanism proposed in EIP-1559, which we call the 1559 mechanism.
This mechanism makes use of a reserve price that is adjusted over time
in response to excess demand, and variable-size blocks to provide an
on-chain signal of excess demand.  All revenue generated by the
reserve price is burned.  Finally, there is effectively an FPA
sprinkled on top: Transaction creators also have the option of
supplementing the reserve price by an additional ``tip'' that is
transferred to the block's miner.
Small tips should be sufficient to
incentivize a miner to include a transaction during a period of stable
demand, when there is room in the current block for all the
outstanding transactions that are willing to pay the reserve price.  Large
tips can be used to encourage special treatment of a transaction, such
as immediate inclusion in a block in the midst of a sudden demand spike.
Section~\ref{s:1559} also introduces the tipless mechanism, a variant
of the 1559 mechanism in which the tips are hard-coded rather than
user-specified.  Relative to the 1559 mechanism, the tipless mechanism
provides more robustness to non-straightforward bidding while
sacrificing some resistance to off-chain collusion.

Section~\ref{s:results} contains the main results of this paper, and
considers in turn the MMIC, DSIC, and OCA-proofness properties.
Section~\ref{ss:mmic_results} provides a sufficient condition for establishing
the MMIC property (Theorem~\ref{t:mmic}): the payment of a transaction
should be independent of the other included transactions and their
bids, and the allocation rule should maximize miner profit.  This
condition implies that FPAs, the 1559 mechanism, and the tipless
mechanism are MMIC (Corollary~\ref{cor:mmic}).  Second-price-type
auctions are notable examples of non-MMIC mechanisms
(Example~\ref{ex:spa_mmic}).

Section~\ref{ss:dsic_results} studies the extent to which different
transaction fee mechanisms meet the DSIC condition.
Theorem~\ref{t:tipless_dsic} shows that the tipless mechanism acts as a
posted-price mechanism (with price equal to the reserve price plus the
hard-coded tip) and, as such, is DSIC.
The 1559 mechanism reverts to an FPA in the special case of a zero
reserve price, but Theorem~\ref{t:1559dsic} identifies two conditions that
together are sufficient for the optimality of straightforward bidding:
the base fee should not be excessively low for the current demand curve
(Definition~\ref{def:low}), and transaction creators should use
individually rational bidding strategies.  Under these conditions, the
1559 mechanism acts as a posted-price mechanism, with price equal to
the reserve price plus the miner's opportunity cost for transaction
inclusion.

Section~\ref{ss:ocaproof_results} considers OCA-proofness.
Proposition~\ref{prop:ocaproof} offers a characterization: A
transaction fee mechanism is OCA-proof if and only if there is always
an individually rational bidding strategy that leads to an outcome
that maximizes the joint utility of the transaction creators and the
block's miner.  This proposition implies that FPAs and the 1559
mechanism are both OCA-proof (Corollaries~\ref{cor:fpa_ocaproof}
and~\ref{cor:1559ocaproof}, respectively).
A key driver of the latter result is that the reserve price
in the 1559 mechanism
is determined solely by past history (leveraging the opportunity in
point~(5)) and not by a block's miner or its contents.
Corollary~\ref{cor:tipless_ocaproof} and
Remark~\ref{rem:tipless_ocaproof} show that the tipless mechanism
loses OCA-proofness exactly in the regime in which the 1559 mechanism
loses the DSIC property (that is, with an excessively low base fee).
Finally, OCA-proofness considerations show that two of the major
innovations in the 1559 mechanism, relative to an FPA---a reserve
price and burned fees---are inextricably linked, as making either
change without the other leads to a non-OCA-proof transaction fee
mechanism.
Intuitively, burning fees in an FPA catalyzes a miner and users to
collude off-chain to avoid the burn
(Corollary~\ref{cor:fpaburn_ocaproof})
while passing revenue from a reserve price to a block's miner
opens the door for low-value transactions to pay the reserve price
on-chain while receiving a partial refund from the miner off-chain
(Corollary~\ref{cor:noburn1559_ocaproof}).

Section~\ref{s:disc} concludes with a discussion of the two designs
that fare best in our study (the 1559 and tipless mechanisms), a
``pay-it-forward'' alternative to money-burning, and the possibility
of long-term collusion by cartels of miners.

\subsection*{Related Work}

\paragraph{First-price auctions.}
Transaction fee mechanisms have been an integral part of blockchain
protocol design since Nakamoto's original white paper introducing the
Bitcoin protocol~\cite{bitcoin}.  Since its genesis, Bitcoin has used
an FPA as its transaction fee mechanism.  The going price for Bitcoin
transactions has been discussed much more thoroughly than the
transaction fee mechanism itself.  For example, the ``blocksize wars''
refers to a bitter dispute within the Bitcoin community over whether
to increase the maximum allowable size of a block (ultimately leading
in~2017 to a split between Bitcoin and a new fork called Bitcoin
Cash)~\cite{blocksize_wars}, and one of the primary arguments put
forth by proponents of larger blocks was that they would prevent (or at
least delay) high transaction fees that would be prohibitive for all
but the most cost-insensitive participants.  Another much-discussed
issue, for example by Carlsten et al.~\cite{ccs16} and Hasu,
Prestwich, and Curtis~\cite{HPC19}, is whether Bitcoin becomes more
vulnerable to various attacks as its block reward decreases (by a
factor of~2 roughly every four years) and the transaction fees per
block increase (as one would expect if the demand for Bitcoin
transactions continues to increase).

Another line of work analyzes Bitcoin transaction fees as a market
equilibrium.  Houy~\cite{houy} and Rizun~\cite{rizun} formalized the
intuitive idea that equilibrium transaction fees should be determined
by the matching of supply with demand.  Richer models of demand, with
waiting costs and pending transactions persisting until inclusion, are
considered by Easley et al.~\cite{EOB17} and Huberman et
al.~\cite{leshno}.  Among other results, these papers show that, as
the fixed block reward decreases, the Bitcoin network can remain
economically viable only if there is sufficient congestion (and
consequent delays) to prop up the market-clearing transaction fees.
More recently, Kiayias et al.~\cite{K+23} explore the related idea of
using delays in a transaction fee mechanism to price differentiate
between latency-sensitive and latency-insensitive transactions.

\vspace{-.1in}

\paragraph{Alternative designs.}
Three previous works focus squarely on the design of alternative
transaction fee mechanisms.  Lavi et al.~\cite{LSZ19} and
Yao~\cite{Y18}, motivated by the aforementioned need to keep Bitcoin
miner revenues high even as the block reward goes to zero, proposed
the ``monopolistic price'' transaction fee mechanism.  In this
mechanism, all transactions included in a block pay the same amount
(per unit size), which is the lowest bid by any included transaction.
(See also Example~\ref{ex:spa_mmic}.)  Miners are then expected to
maximize their revenue (price times quantity), which may involve
restricting the supply (i.e., producing an underfull block) to prop up the price.
Lavi et al.~\cite{LSZ19} and Yao~\cite{Y18} proved that this mechanism
is ``approximately DSIC,'' in the sense that truthful bidding is an
approximately dominant strategy for users as the number of users grows
large.
With respect to the two new notions of
incentive-compatibility introduced in this paper, the mechanism is
MMIC but, on account of choosing revenue-maximization over joint
utility-maximization, is not OCA-proof.
Very recently, Nisan~\cite{N23} studied the price dynamics of this
mechanism over a sequence of blocks in a model with persistent
transactions.

Basu et al.~\cite{beos} also proposed an alternative transaction fee
mechanism to FPAs, with an eye toward stronger incentive-compatibility
guarantees; we summarize a slightly simplified version of it here.
With the monopolistic price mechanism as a starting point, the
mechanism in~\cite{beos} adds two additional ideas.
The first is to automatically charge
only a nominal transaction fee to all transactions in any block that 
is not full.  This rule is intended to prevent miners from
boosting their revenue through the production of underfull blocks,
though by itself the rule is toothless and leads to a
mechanism equivalent to the monopolistic price mechanism (a miner can
costlessly extend its favorite underfull block with minimum bid~$b$ to
a full block with minimum bid~$b$ using fake transactions, all with
bid~$b$).   The second new addition is that transaction fees are partially
paid forward, with the transaction fee revenue from a block~$B$ split
evenly between~$B$'s miner and the miners of the~$\ell-1$ subsequent
blocks (here~$\ell$ is a tunable parameter).
Thus, the miner of a block gets a $1/\ell$ fraction of the transaction
fee revenue in that block, along with a $1/\ell$ fraction of the
combined revenue of the preceding $\ell-1$ blocks.  As a result, for
$\ell \ge 2$, fake transactions now carry a cost: the miner pays their
full transaction fees but recoups only a $1/\ell$ fraction of them as
revenue.  In our terminology, Basu et al.~\cite{beos} prove that their
mechanism is approximately MMIC and approximately DSIC provided 
the range of possible valuations is bounded
and the number of transactions involved is sufficiently large.
Perhaps the biggest vulnerability of
this mechanism is its failure to satisfy OCA-proofness (for every
$\ell \ge 2$): a miner and transaction creators could collude to
move all transaction fees off-chain (paid to the miner), leaving no on-chain fees to pay
forward to future miners (cf., Corollary~\ref{cor:noburn1559_ocaproof}).

\vspace{-.1in}

\paragraph{Credible mechanisms.}
Our notion of incentive-compatibility for myopic miners
(Definition~\ref{def:mmic}) concerns an untrusted auctioneer (the
miner), and as such is related to {\em credible
  mechanisms}~\cite{AL20}.  Intuitively, a mechanism is credible if
the agent tasked with carrying it out has no plausibly deniable
utility-improving deviation.  Interestingly, because miners can
manipulate allocations but not prices
(Remark~\ref{rem:protocol_control}), there is no need to restrict to
``plausibly deniable'' deviations in Definition~\ref{def:mmic}.
Another difference is that the current theory of credible mechanisms,
and in particular the characterizations in~\cite{AL20}, is largely
restricted to single-item auctions (though see~\cite{D+20} for
approximate revenue-optimality results in more general settings).
Blockchain transaction fee mechanisms must work in the more general
setting of (multi-item) knapsack auctions~\cite{AH06}.  Finally, the
model of credible mechanisms developed in~\cite{AL20} assumes that the
auctioneer can communicate privately with each bidder, in sharp
contrast to the publicly visible mempool of pending transactions and
record of confirmed transactions that feature in the model here.
Recent investigations of credible mechanisms in the context of
blockchain protocols and decentralized finance include Ferreira and
Parkes~\cite{FP23} and Chitra, Ferreira, and Kulkarni~\cite{CFK23}.

\vspace{-.1in}

\paragraph{Recent related work.} 
Many papers on transaction fee mechanism design (in addition
to~\cite{CFK23,FP23,K+23,N23}, all cited above) have appeared since
the publication of~\cite{eip1559full} and other preliminary versions
of this work.  Papers that are directly relevant to the specific
mechanism proposed in EIP-1559 include Crapis, Moallemi, and
Wang~\cite{CMW23}, Ferreira et al.~\cite{moroz_aft}, Leonardos et
al.~\cite{monnot_aft,chaos}, and Ndiaye~\cite{abdou}, which investigate
the properties of different base fee update rules; Liu et
al.~\cite{L+22}, Reijsbergen et al.~\cite{reijsbergen2021transaction},
and Zhang and Zhang~\cite{ZZ23}, which provide empirical analyses of the
mechanism after its deployment; and Azouvi et al.~\cite{A+23} and
Hougaard and Pourpouneh~\cite{HP23}, which investigate multi-block
strategies by (non-myopic) miners.  More distantly related are the
works of Canidio~\cite{C23} and Kiayias, Lazos, and
Schlegel~\cite{KLS24}, which explore the pros and cons of burning
transaction fees (as in EIP-1559).

A number of recent papers study the design of transaction fee
mechanisms more generally.  Chung and Shi~\cite{CS23} introduce a
collusion-resistance condition incomparable to OCA-proofness, which
they call ``$c$-side-contract-proofness ($c$-SCP),'' and prove that no
transaction fee mechanism can satisfy both the DSIC and the $c$-SCP
conditions (for any positive integer~$c$).  The $c$-SCP condition
requires that there is never a way for a miner and at most~$c$ users
to collude through an OCA and increase their joint surplus (relative
to the outcome in which those users bid truthfully and the miner
executes the intended allocation rule). OCA-proofness, by contrast,
only concerns OCA-enabled deviations by the ``grand coalition,''
meaning the miner and {\em all} the users with transactions in the
mempool. Intuitively, OCA-proofness protects against the miner and
users banding together to ``cheat the protocol'' (cf.,
Corollaries~\ref{cor:fpaburn_ocaproof}
and~\ref{cor:noburn1559_ocaproof}); the $c$-SCP condition also
protects against a subset of users joining forces with the miner
to cheat the other users. See Gafni and Yaish~\cite{GY22} for further
discussion of this point.  Chung and Shi~\cite{CS23} show that their
impossibility result can be partially circumvented if users have what
they call ``$\gamma$-strict'' utility functions (intuitively, with
each user assuming that overbidding will lead to comeuppance in the
future); see also Tang and Yao~\cite{TY23}.

The proof of the impossibility result in~\cite{CS23} also shows that,
in the ``infinite block size regime'' in which the entire mempool
would fit in a single block, every DSIC and 1-SCP transaction fee
mechanism must burn all user payments (with the miner earning zero
revenue).  
Several recent works have proposed changes to the model with the goal
of evading this negative result:
Shi, Chung, and Wu~\cite{SCW22} propose using a
multi-party computation to generate unmanipulatable randomness;
Wu, Shi, and Chung~\cite{WSC24} assume a lower bound on the
number of non-strategic bidders; and Gafni and Yaish~\cite{GY22} 
restrict attention to a discrete valuation space.

Chen et al.~\cite{C+22}, Gafni and Yaish~\cite{GY22}, and Liu et
al.~\cite{L+24} explore a range of design questions for non-DSIC
transaction fee mechanisms, with an emphasis on Bayesian
incentive-compatible mechanisms.

Finally, Bahrani, Garimidi, and Roughgarden~\cite{BGR23} extend the
model in this paper by allowing a miner to have its own private
valuation for a block (as opposed to caring only about transaction
fees and other in-protocol rewards).  The motivation for this
extension is ``MEV,'' or ``maximal extractable
value''~\cite{flashboys}, which refers to the fact that a miner may be
able to benefit, in a way invisible to the blockchain protocol, by
including or excluding certain transactions.  MEV has become
particularly prominent in the Ethereum ecosystem since the rise of
decentralized finance (``DeFi''), for example with miners frontrunning
trades on decentralized exchanges.  The main impossibility result
in~\cite{BGR23} shows that the presence of MEV makes transaction fee
mechanism design significantly harder: even after setting aside any
collusion-resistance requirements, with private miner valuations, no
transaction fee mechanism can be incentive-compatible simultaneously for
users and for miners.

\section{Preliminaries}\label{s:prelim}

\subsection{Blockchain Transactions}\label{ss:tx}

We consider blockchain protocols that operate in the following way (as
in Bitcoin and Ethereum, for example).  The blockchain protocol
maintains state (such as account balances) and carries out an ordered
sequence of transactions that read from and write to the current state
(such as transfers of the blockchain's native cryptocurrency).  We
assume that each transaction~$t$ has a publicly visible and immutable
{\em size} $s_t$, and that the creator of a transaction is responsible
for specifying a {\em bid} $b_t$ per unit size, indicating a
willingness to pay of up to $s_t \cdot b_t$ (in the native currency)
for the blockchain's execution of their transaction.\footnote{For
  example, in Ethereum, transaction size is called the ``gas limit''
  and is a measure of the cost (in computation, storage, and so on)
  imposed on the system by the transaction's execution.  
  The most basic type of transaction (a simple currency
  transfer) requires 21,000 units of gas; more complex transactions
  require more gas.}

A {\em block} is an ordered sequence of transactions and associated
metadata (such as a reference to the predecessor block).  There is a
cap on the total size of the transactions included in a block, which
we call the {\em maximum block size}.\footnote{For example, in
  Ethereum, prior to the deployment of EIP-1559, the maximum block
  size was~15M gas, enough for roughly 600 of the simplest
  transactions.}  Blocks are created and added to the blockchain by
{\em miners}.  Transactions are submitted by their creators to a
peer-to-peer network; each miner monitors this network, maintains a
{\em mempool} of outstanding transactions, and collects a subset of
them into a block.  To add a block to the blockchain, a miner provides
a proof-of-work in the form of a solution to a computationally
difficult cryptopuzzle; the puzzle difficulty is adjusted over time to
maintain a target rate of block creation (in Ethereum, roughly one
block per 13 seconds).\footnote{As noted in footnote~\ref{foot:pos},
  Ethereum switched from proof-of-work to proof-of-stake on September
  15, 2022. In its proof-of-stake incarnation, miners are replaced by
  validators that have registered a specified amount of native
  currency into a designated staking contract. Every~12 seconds,
  the protocol chooses one validator uniformly at random (using
  pseudorandomness derived from the blockchain's state) as the one
  responsible for assembling the next block.}  Importantly, the miner
of a block has dictatorial control over which outstanding transactions
are included and their ordering within the block.  Transactions are
considered confirmed once they are included in a block that is added
to the blockchain.  The current state of the blockchain protocol is
then the result of executing all the confirmed transactions, in the
specified order.\footnote{Technically, a fork selection rule (e.g.,
  longest-chain of a variant thereof) is used to
  resolve forks, meaning two or more blocks that claim a common
  predecessor. The confirmed transactions are then defined as those
  in the blocks that are well ensconced in the selected chain (that is,
  already extended by sufficiently many subsequent blocks).}

The {\em transaction fee mechanism} is the part of the protocol that
determines the amount that a creator of a confirmed transaction 
pays, and to whom that payment is directed.  Historically, the
biggest blockchains have used a separate first-price (i.e.,
pay-as-bid) auction for each block, with all proceeds going to the
block's miner.

\subsection{The Basic Model}\label{ss:model}

This paper focuses primarily on incentives for miners and users at the
time scale of a single block, and on several important types of
attacks that can be carried out at this time scale (untruthful user
bids, the insertion of fake transactions and other deviations by a
miner, and off-chain agreements between miners and users).  
We discuss incentive
issues and attacks that manifest over longer time scales 
in Section~\ref{s:disc}.

On the supply side, let~$C$ denote the maximum size of a block ($C$ is
for ``capacity'').
On the demand side, we use~$M$ to denote the set of
transactions in a miner's mempool at the time of the current block's
creation.

We associate three parameters with each transaction $t \in M$:
\begin{itemize}

\item a {\em size} $s_t$;

\item a {\em valuation} $v_t$ per unit of size (in the native currency);

\item a {\em bid} $b_t$ per unit of size (in the native currency).

\end{itemize}
The valuation is the maximum per-size price the transaction's
creator would be willing to pay for its execution in the current
block.  The bid corresponds to the per-size price
that the creator actually offers to pay, which in general can be
less (or more) than the valuation.  
The size and bid of a confirmed transaction are recorded
on-chain; the valuation of a transaction is private to its creator.

\subsection{The Design Space: Allocation, Payment, and Burning Rules}\label{ss:model2}

A transaction fee mechanism decides which transactions should be
included in the current block, how much the creators of those
transactions must pay, and to whom their payments are directed.  These
decisions are formalized by three functions: an {\em allocation rule},
a {\em payment rule}, and a {\em burning rule}.  There are two
significant differences between the formalism in this section and that
in classical mechanism design, both dictated by blockchain
idiosyncrasies: payments should depend only on on-chain information
(see Remark~\ref{rem:li}), and revenue can be directed wherever the
protocol sees fit (see Definition~\ref{d:burning}).

\subsubsection{Allocation Rules}

We use $\history = B_1,B_2,\ldots,B_{k-1}$ to denote the sequence of
blocks in the current chain (with~$B_1$ the initial genesis block
and~$B_{k-1}$ the most recently added block) and~$M$ the pending
transactions in the mempool.  (Here $\history$ is for ``history.'')

\begin{definition}[Allocation Rule]
An {\em allocation rule} is a vector-valued
function~$\allocs$ from the on-chain
history~$\history$ and mempool~$M$ to a
0-1 value $\alloct(\history,M)$ for each pending  transaction~$t \in M$.
\end{definition}
A value of~1 for $\alloct(\history,M)$
indicates transaction~$t$'s inclusion in the current
block~$B_k$; a value of~0 indicates its exclusion.
We sometimes write~$B_k = \allocs(\history,M)$, with the
understanding that~$B_k$ is the set of transactions~$t$ for which
$\alloct(\history,M) = 1$.

We consider only feasible allocation rules, meaning allocation rules
that respect the maximum block size~$C$.
\begin{definition}[Feasible Allocation Rule]
An allocation rule $\allocs$ is {\em feasible} if, for every possible
history $\history$ and mempool~$M$,
\begin{equation}\label{eq:feasible}
\sum_{t \in M} s_t \cdot \alloct(\history,M) \le C.
\end{equation}
\end{definition}
We call a set~$T$ of transactions {\em feasible} if they can all be
packed in a single block: $\sum_{t \in T} s_t \le C$.

\begin{remark}[Miners Control Allocations]\label{rem:minerscontrol}
While a transaction fee mechanism is generally designed with a specific
allocation rule in mind, it is important to remember that a miner
ultimately has dictatorial control over the block it creates.
\end{remark}

\subsubsection{Payment and Burning Rules}

The payment rule specifies the revenue earned by the miner from
included transactions.
\begin{definition}[Payment Rule]
A {\em payment rule} is a function~$\prices$ from the current on-chain
history $\history$ and transactions~$B_k$ included in
the current block to a nonnegative number
$\pricet(\history,B_k)$ for each included transaction~$t
\in B_k$.
\end{definition}
The value of $\pricet(\history,B_k)$ indicates the
payment from the creator of an included transaction~$t \in B_k$ to the
miner of the block~$B_k$ (in the native currency, per unit of size).

Finally, the burning rule specifies the amount of money burned
for each of the included transactions.
\begin{definition}[Burning Rule]\label{d:burning}
A {\em burning rule} is a function~$\burns$ from the current on-chain
history~$\history$ and transactions~$B_k$ included in
the current block to a nonnegative number
$\burnt(\history,B_k)$ for each included transaction~$t
\in B_k$.
\end{definition}
The value of $\burnt(\history,B_k)$ indicates the amount of money
burned by the creator of an included transaction~$t \in B_k$ (in the
native currency, per unit of size).  Burning money can be equated with
a lump-sum refund to holders of the currency through deflation, \`{a}
la stock buybacks.  An alternative to money-burning that has similar
game-theoretic properties is to redirect a block's revenue to entities
other than the block's miner, such as a foundation or the miners of
future blocks (see Section~\ref{s:disc} for further discussion).

\begin{example}[First-Price Auction]\label{ex:fpa}
The (intended) allocation rule~$\allocs^{f}$ in the first-price
auctions historically deployed in Bitcoin and Ethereum is to include a
feasible subset of outstanding transactions that
maximizes the sum of the size-weighted bids.
That is, the $\alloct^f$'s are assigned 0-1 values to maximize
\begin{equation}\label{eq:fpa_obj}
\sum_{t \in M} \alloct^{f}(\history,M) \cdot b_t
  \cdot s_t,
\end{equation}
subject to~\eqref{eq:feasible}.\footnote{In practice, some miners
  prefer to employ a greedy heuristic (ordering transactions by bid
  and including the largest feasible prefix of transactions)
rather than solve this knapsack
  problem  optimally.  Because a typical block contains hundreds of
  transactions, the difference in revenue between a greedy and an
  optimal knapsack solution is usually negligible and can be safely
  glossed over.}
A winner then pays its bid (per unit of size), with all revenue going
to the miner (and none burned), 
no matter what the blockchain history and other
included transactions:
$\pricet^f(\history,B_k) = b_t$
and $\burnt^f(\history,B_k) = 0$
for all~$\history$ and~$t \in B_k$.
\end{example}

\begin{remark}[The Protocol Controls Payments and Burns]\label{rem:protocol_control}
A miner does not control the payment or burning rule, except inasmuch as
it controls the allocation, meaning the transactions included
in~$B_k$.  Given a choice of allocation, the on-chain payments and fee
burns are completely specified by the protocol.  (Miners might seek out
off-chain payments, however; see Section~\ref{ss:oca}.)
\end{remark}

\begin{remark}[Mempool-Dependence]\label{rem:li}
The allocation rule~$\allocs$ depends on the mempool~$M$ 
because a miner can base its allocation
decision on the entire set of outstanding transactions.
Payment and burning rules must be computable from the on-chain
history~$\history$, and in particular cannot depend on
outstanding transactions of~$M$ excluded from the current
block~$B_k$.\footnote{In principle, the state of a blockchain
  protocol could keep track of additional data useful for computing a
  payment or burning rule, such as the highest bid by an excluded
  transaction; see Chung and Shi~\cite{CS23} for an in-depth
  exploration of this idea.}
\end{remark}

\begin{definition}[Transaction Fee Mechanism (TFM)]
A {\em transaction fee mechanism} (or {\em TFM}) is a triple $\tfm$ in which
$\allocs$ is a feasible allocation rule, $\prices$ is a payment rule,
and~$\burns$ is a burning rule.
\end{definition}

A TFM is a mechanism for allocating transactions to a single block.  A
blockchain protocol is free to use different TFMs for different
blocks, perhaps informed by the contents of previous blocks.

\section{Miners, Users, and Incentive Compatibility}\label{s:ic}

In a permissionless blockchain protocol such as Bitcoin or Ethereum, a
mechanism designer must guard against harmful deviations from intended
behavior by users (the creators of transactions), by miners, and by
cartels of users and miners.

\subsection{Users}\label{ss:dsic}

We consider a notion of incentive compatibility for users that is
familiar from traditional mechanism design, namely dominant-strategy
incentive compatibility.  Recall from Section~\ref{ss:model} that the
valuation~$\valt$ of a transaction~$t$ is the maximum price (per unit
of size) the transaction's creator would be willing to pay for its
inclusion in the current block.  We assume that a user bids in order
to maximize their net gain (i.e., the value for inclusion minus the
cost for inclusion).  To reason about the different possible bids that
could be attached to a transaction~$t$ submitted to a mempool~$M$, we
use $M(\bidt)$ to denote the result of adding the transaction~$t$ with
bid $\bidt$ to~$M$.  For simplicity, we assume that each transaction
in the current mempool has a distinct creator.\footnote{See
  Section~\ref{s:disc} for further discussion.}

\begin{definition}[User Utility Function]\label{def:userutil}
For a TFM $\tfm$, on-chain history~$\history$, and
mempool~$M$, 
the utility of the originator of a transaction~$t \notin M$ with valuation
$\valt$ and bid~$\bidt$ is
\begin{equation}\label{eq:userutil}
u_t(\bidt) :=
\left( \valt -
\underbrace{p_t(\history,B_k)}_{\substack{\text{payment to miner}\\ \text{(per
    unit size)}}} -
\underbrace{q_t(\history,B_k)}_{\substack{\text{burn}\\ \text{(per unit size)}}} \right)
\cdot s_t
\end{equation}
if~$t$ is included in $B_k = \allocs(\history,M(\bidt))$
and~0 otherwise.
\end{definition}
In~\eqref{eq:userutil}, we highlight the dependence of the utility
function on the argument that is directly under a user's control, the
bid~$\bidt$ submitted with the transaction.  
Because our focus is on incentive issues on the single-block time
scale, we do not explicitly model intertemporal effects such as
waiting costs or otherwise provide additional foundations for the
valuation $\valt$ of immediate inclusion.

We assume that a
transaction creator bids to maximize the utility function
in~\eqref{eq:userutil}.  A TFM is then {\em dominant-strategy
  incentive compatible (DSIC)} if, assuming that the miner carries out
the intended allocation rule, every user 
(no matter what their valuation)
has a dominant strategy---a bid that always maximizes the user's
utility~\eqref{eq:userutil}, no matter what the bids of the other
users.\footnote{To describe some of the transaction fee mechanisms
  used in practice, it will be convenient to allow dominant strategies
  other than truthful bidding in the definition of DSIC.  The
  revelation principle (e.g.,~\cite{M81}) can be
  used to convert any such DSIC mechanism into one in which truthful
  bidding is a dominant strategy.}
FPAs are, of course, not DSIC.
Vickrey-Clarke-Groves (VCG) mechanisms are classical examples of DSIC
mechanisms, with truthful bidding a dominant strategy; as
Example~\ref{ex:spa_mmic} shows, however, these mechanisms are
problematic in a blockchain context.

\subsection{Myopic Miners}

We next formalize incentive compatibility at the single-block time
scale from the perspective of a miner---intuitively, that the miner is
incentivized to implement the intended allocation rule.

We include in our model of miner utility a {\em marginal cost} (per
unit size), denoted by~$\mu$.  (The casual reader is encouraged to
take~$\mu=0$ throughout the paper.)  This parameter reflects the fact
that every transaction included in a block potentially imposes a small
marginal cost on that block's miner.\footnote{For example, in
  proof-of-work blockchain protocols, the probability that a mined
  block is orphaned from the main chain (i.e., the ``uncle rate'')
  appears to increase roughly linearly with the block
  size~\cite{DW13}.}  The parameter~$\mu$ can be interpreted as the
minimum price that a profit-maximizing miner would be willing to
accept in exchange for transaction inclusion when the maximum block
size is not a binding constraint.  For simplicity, we assume
that~$\mu$ is the same for all miners and common knowledge among
users.

\begin{remark}[First-Price Auctions Revisited]\label{rem:fpa_mu}
The
first-price auction in Example~\ref{ex:fpa} is stated for the case
  of $\mu=0$.  More generally, the miner should be expected to maximize
  its revenue minus its costs and the ``$b_t \cdot s_t$'' term
  in~\eqref{eq:fpa_obj} should be
  replaced by $(b_t-\mu) \cdot s_t$:
\begin{equation}\label{eq:fpa_obj2}
\sum_{t \in M} \alloct^{f}(\history,M) \cdot (b_t-\mu)
  \cdot s_t.
\end{equation}
\end{remark}

In addition to choosing an allocation
(Remark~\ref{rem:minerscontrol}), we assume that miners can costlessly
add any number of fake transactions to the mempool (with arbitrary
sizes and bids).  We call a miner {\em myopic} if its utility function
is its net revenue from the current block (given the transactions and
bids submitted by the users).\footnote{Miners may also receive
  transaction-independent rewards from a blockchain protocol for
  producing a block, such as the ``block reward'' in Bitcoin.  Such
  rewards are independent of the miner's actions and therefore
  irrelevant for our game-theoretic analysis.}
\begin{definition}[Myopic Miner Utility Function]\label{def:mmutil}
For a TFM $\tfm$, on-chain history~$\history$, 
mempool~$M$, fake transactions~$F$, and choice~$B_k \subseteq M \cup F$ of
included transactions (real and fake), the utility of a {\em myopic
  miner} is
\begin{equation}\label{eq:mmutil}
u(F,B_k) := 
\underbrace{\sum_{t \in B_k \cap M} \pricet(\history,B_k) \cdot
  s_t}_{\text{miner's revenue}}
-
\underbrace{\sum_{t \in B_k \cap F} \burnt(\history,B_k) \cdot
  s_t}_{\text{fee burn for miner's fake transactions}}
- \underbrace{\mu \sum_{t \in B_k} s_t}_{\text{marginal costs}}.
\end{equation}
\end{definition}
The first term sums over only the real included transactions, as for
fake transactions the payment goes from the miner to itself.  The
second term sums over only the fake transactions, as for real
transactions the burn is paid by their creators (not the miner).
In~\eqref{eq:mmutil}, we highlight the dependence of the utility
function on the two arguments that are under a miner's direct control,
the choices of the fake transactions~$F$ and included (real and 
fake) transactions~$B_k$.\footnote{We can assume that $F \subseteq B_k$,
as there's no point to creating and then excluding a fake
transaction.}

A transaction fee mechanism is generally designed with a specific
allocation rule in mind (Remark~\ref{rem:minerscontrol}), but will
miners actually implement it?
\begin{definition}[Incentive-Compatibility for Myopic Miners
  (MMIC)]\label{def:mmic} \mbox{}\\
A TFM $\tfm$ is {\em incentive-compatible for myopic miners (MMIC)}
if, for every on-chain history~$\history$ and
mempool~$M$, a myopic miner maximizes its utility~\eqref{eq:mmutil} by
creating no fake transactions (i.e., setting $F=\emptyset$) and
following the suggestion of the allocation rule $\allocs$ (i.e.,
setting $B_k = \allocs(\history,M)$).
\end{definition}

For example, FPAs are MMIC---the intended allocation rule maximizes
miner net revenue, which the miner is happy to do (see also Corollary~\ref{cor:mmic}).
Second-price-type auctions are not MMIC, however, as in many cases a
miner can boost its revenue through the inclusion of fake
transactions:
\begin{example}[Second-Price-Type Auctions Are Not MMIC]\label{ex:spa_mmic}
Consider a collection of transactions, all of unit size, and a block that has
room for three of them.  In this setting, a second-price-type auction
would prescribe including the three transactions with the highest bids
and charging each of them the lowest of these three
bids.\footnote{A classical Vickrey auction would prescribe
  charging the highest losing bid rather than the lowest winning bid.
  The former is off-chain and thus unusable in a blockchain
  context, while the latter is on-chain and typically a close enough approximation.}
Now imagine that the top three bids are 10, 8, and 3.  If a miner
honestly executes the auction, its revenue will be $3 \times 3 =
9$.  If the miner instead submits a fake transaction with
bid~8 and executes the auction (with the top two real
transactions included along with the fake transaction), its net
revenue jumps to $2 \times 8 = 16$.
\end{example}

\subsection{Off-Chain Agreements}\label{ss:oca}

Another idiosyncrasy of the blockchain setting is the easy
availability of side channels and the consequent risk of off-chain
collusion by users and miners.  This danger is not
hypothetical for a general smart contracts platform such as Ethereum,
where off-chain markets are already common in practice.\footnote{See,
  e.g., \url{https://docs.flashbots.net/flashbots-auction/overview/}.}

\begin{definition}[Off-Chain Agreement (OCA)]\label{def:oca}
For a miner~$m$ and a set $T$ of transactions,
an {\em off-chain agreement   (OCA)} between 
$T$'s creators and~$m$ specifies:
\begin{itemize}

\item [(i)] a bid vector~$\bids$, with~$\bidt$ indicating the bid
to be submitted with the transaction~$t \in T$;

\item [(ii)] an allocation vector $\allocs$, indicating the
  transactions that the miner~$m$ will include in its block;

\item [(iii)] a per-size transfer~$\tau_t$ from the creator of
  each transaction~$t \in T$ to the miner~$m$.  (If~$\tau_t < 0$,
the transfer should be interpreted as a refund from the
  miner to the transaction creator.)

\end{itemize}
\end{definition}
In an OCA, each creator of a transaction~$t$ agrees to
submit~$t$ with an on-chain bid of $\bidt$ while transferring~$\tau_t
\cdot s_t$ to the miner~$m$ off-chain;
the miner, in turn, agrees to mine a block comprising the
agreed-upon transactions of~$T$.

Intuitively, we define a TFM to be ``OCA-proof'' if no OCA Pareto
improves over a canonical on-chain outcome.
More formally,
by a {\em bidding strategy}, we mean a function $\sigma:\mathbb{R}^{+}
\rightarrow \mathbb{R}^{+}$ mapping user valuations to on-chain bids.
For a valuation profile~$\vals$, $\sigma(\vals)$ denotes the bid
vector obtained by the component-wise application of~$\sigma$.  
With respect to a fixed TFM $\tfm$, a bidding strategy~$\sigma$ is
{\em individually rational} if collective bidding according
to~$\sigma$ guarantees nonnegative utility for all.  Equivalently
(using~\eqref{eq:userutil}), a bidding strategy is individually
rational if
for every on-chain history~$\history$, 
transactions~$T$ with valuations~$\vals$, 
and transaction~$t$
in $B_k=\allocs(\history,T(\sigma(\vals)))$:\footnote{Here $T(\bids)$ denotes the
mempool with transactions specified by~$T$ and bids specified
by~$\bids$.}
\[
\pricet(\history,B_k) +\burnt(\history,B_k) \le \valt.
\]
\begin{definition}[OCA-Proof]\label{def:ocaproof}
A TFM $\tfm$ is {\em OCA-proof} if, for every on-chain
history~$\history$, there exists an individually rational bidding  strategy~$\sigma_{\history}$ such that, 
for every possible set~$T$ of outstanding transactions and
valuations~$\vals$, there is no OCA 
under which
the utility~\eqref{eq:userutil} of every transaction creator and the
utility~\eqref{eq:mmutil} of the miner is strictly higher than in the
outcome $B_k=\allocs(\history,M(\sigma_{\history}(\vals)))$ with
on-chain bids $\sigma_{\history}(\vals)$ and no off-chain
transfers.\footnote{This definition subsumes and corrects the
  preliminary definition of OCA-proofness in~\cite{eip1559full}.}
\end{definition}

In other words, if a TFM is {\em not} OCA-proof, then there is a
possible blockchain history such that, no matter what individually
rational bidding strategy users use, there will be cases in which
off-chain collusion collectively benefits the miner and users.

OCA-proofness can differentiate seemingly similar TFMs.  For example,
we'll see in Section~\ref{ss:ocaproof_results} that first-price
auctions in which all proceeds go to the miner are OCA-proof, while
those in which any amount of revenue is burned are not (intuitively,
because OCAs allow the miner and users to coordinate and evade the
intended burn).

\section{The 1559 and Tipless Mechanisms}\label{s:1559}

This section formalizes the description of the transaction fee
mechanism proposed in EIP-1559, along with an alternative
design that offers a different set of trade-offs (a stronger
incentive-compatibility guarantee for users but weaker resistance to
off-chain agreements).  Both will serve in the next section as
running examples for our main results.

\paragraph{Burning a history-dependent base fee.}
Here are the first three (of eight) key ideas in EIP-1559:
\begin{enumerate}

\item [1.] Each block has a protocol-computed reserve price (per unit size)
 called the {\em base fee}.  Paying the base fee is a
  prerequisite for inclusion in a block.

\item [2.] The base fee is a function of the preceding blocks only, and
does not depend on the transactions included in the current
block.

\item [3.] All revenues from the base fee are burned---that is, permanently
  removed from the circulating supply of the native currency.

\end{enumerate}
The second point is underspecified; how, exactly, is the base fee
derived from the preceding blocks?  Intuitively, increases and
decreases in demand should put upward and downward pressure on the
base fee, respectively.
But the blockchain records only the confirmed
transactions, not the transactions that were priced out.  If miners
publish a sequence of maximum-size blocks, how can the protocol
distinguish whether the current base fee is too low or exactly
right?

\paragraph{Variable-size blocks.}
The next key idea is to relax the constraint that every block has size
at most~$C$ and instead require only that the {\em average} block
size is at most~$C$.
The mechanism in EIP-1559 then uses past block sizes as an on-chain
measure of demand, with big blocks (size above~$C$) and small
blocks (size less than~$C$) signaling positive
and negative excess demand, respectively.
Some finite maximum block size is still needed to control network
congestion, which in EIP-1559 is twice the average block size:
\begin{enumerate}

\item [4.] Double the maximum block size (i.e., define $C_{max}:=2C$),
with the old maximum~$C$ serving as the new {\em target} block size
($C_{target} := C$).

\item [5.] Adjust the base fee upward or downward whenever the
  size of the latest block is bigger or smaller than the target
  block size, respectively.\footnote{Precisely, empty and
    maximum-size blocks decrease and increase the base fee by 12.5\%,
    respectively, with the effect of other block sizes then determined
    by linear interpolation.  (In particular, blocks matching the
    target size do not alter the base fee.)}
\end{enumerate}

If the base fee is burned rather than given to miners, why should
miners bother to include any transactions in their blocks at all?
Also, what happens when there are lots of transactions (with total
size exceeding~$C_{max}$)
willing to pay the current base fee?  

\paragraph{Tips.}
The transaction fee mechanism proposed in EIP-1559 addresses the
preceding two questions by allowing the creator of a transaction to
specify a {\em tip}, to be paid above and beyond the base fee, which
is transferred to the miner of the block that includes the transaction
(as in a first-price auction).  
Intuitively, small tips should be sufficient to
incentivize a miner to include a transaction during a period of stable
demand, when there is room in the current block for all the
outstanding transactions that are willing to pay the base fee.  Large
tips can be used to encourage special treatment of a transaction, such
as the immediate inclusion in a block in the midst of 
a sudden demand spike.  The final ingredients of the mechanism in
EIP-1559 are:
\begin{enumerate}

\item [6.] Rather than a single bid, the creator of a transaction is
now  responsible for specifying both
a {\em tip} and a {\em fee cap} for it.
A transaction will be included in a block only if its fee cap is at
least the block's base fee.

\item [7.] If a size-$s$ transaction with tip~$\delta$ and fee cap $c$
  is included in a block with base fee~$r$, the transaction
  creator pays a total of $s \cdot \min\{ r+\delta, c\}$.

\item [8.] Revenue from the base fee (that is, $s \cdot r$) is burned
  and the remainder ($s \cdot \min\{ \delta, c-r\}$) is transferred to
  the miner of the block.

\end{enumerate}
Thus, with respect to a base fee~$r$, a transaction~$t$ with
tip~$\delta_t$ and fee cap~$c_t$ is interpreted as a transaction with
bid $\bidt = \min \{ r + \delta_t, c_t\}$.\footnote{%
  Our model and results consider only the single-block time scale,
  within which the base fee of this mechanism would be fixed and
  public.  As such, the components of the mechanism that govern how
  the base fee evolves over multiple blocks, including the
  relationship between~$C_{target}$ and~$C_{max}$ and the exact
  formula for the base fee update rule, are not directly relevant for
  our analysis. However, the interpretation of
  Theorem~\ref{t:1559dsic} and Corollary~\ref{cor:tipless_ocaproof}
  as ``usually DSIC'' and ``usually OCA-proof'' guarantees 
  does implicitly assume a base fee update rule 
  that, under ``normal conditions,'' prevents the base fee from
  becoming excessively low relative to the current demand
  (in the sense of Definition~\ref{def:low}); see also footnote~\ref{foot:alpha}.}

\paragraph{The 1559 mechanism.}
We are now in a position to phrase EIP-1559's transaction fee
mechanism---for shorthand, the {\em 1559 mechanism}---in 
the language of Section~\ref{s:prelim}.  
Because the base fee of a block depends solely on the contents of the
preceding blocks, we can denote by~$\alpha(\history)$ 
the base fee of a block~$B_k$ with history~$\history$.
\begin{definition}[1559 Mechanism]\label{d:1559}
For each history~$\history$ and corresponding base fee
$r = \alpha(\history)$:
\begin{itemize}

\item [(a)] the (intended) allocation rule~$\allocs^{\nine}$ of the
  1559  mechanism 
is to include a feasible subset of outstanding transactions that
maximizes the sum of the size-weighted bids,
less the cost and total base fee paid (and 
subject to the block size constraint~\eqref{eq:feasible}, with
capacity~$C_{max}$):
\begin{equation}\label{eq:1559_obj}
\sum_{t \in M \,:\, \bidt \ge r} \alloct^{\nine}(\history,M) \cdot
(\bidt - r - \mu)   \cdot s_t;
\end{equation}

\item [(b)] the payment rule of the 1559 mechanism is
\[
\pricet^{\nine}(\history,B_k) = b_t - r
\]
for all $t \in B_k$;

\item [(c)] the burning rule of the 1559 mechanism is 
\[
\burnt^{\nine}(\history,B_k) = r
\]
for all $t \in B_k$.

\end{itemize}
\end{definition}

\paragraph{The tipless mechanism.}
We next define the {\em tipless mechanism}, so called because it is
essentially the 1559 mechanism with constant and hard-coded tips
rather than variable and user-specified tips.  As with the 1559
mechanism, each block has a base fee $r =
\alpha(\history)$ that depends on past blocks and is
burned.
The creator of a transaction~$t$ specifies a fee cap~$c_t$
but no tip.  This parameter induces a bid $\bidt$ for the transaction
with respect to any given base fee~$r$, namely
$\bidt = \min\{ r + \delta, c_t \}$.  
Here $\delta$ is a hard-coded tip to incentivize miners to include
transactions---for example, equal to (or perhaps slightly higher than)
the marginal cost~$\mu$.\footnote{More
  generally, the hard-coded tip~$\delta$ could be adjusted over time 
via hard forks, as is typically done for a number of other protocol
parameters.}
In effect, the bid space of the mechanism is~$[0,r+\delta]$, with
higher bids automatically interpreted as $r+\delta$ by the protocol.
Only transactions with bid $r+\delta$ are eligible for inclusion in a
block with base fee~$r$; transactions with lower bids included in the
block are considered invalid by the protocol.

\begin{definition}[Tipless Mechanism]\label{d:tipless}
Fix a hard-coded user tip~$\delta$.
For each history~$\history$ and corresponding base fee
$r = \alpha(\history)$:
\begin{itemize}

\item [(a)] the (intended) allocation rule~$\allocs^{\delta}$ of the
tipless mechanism is to maximize miner revenue from eligible
transactions (i.e., those with bid at least $r+\delta$), less costs and
subject to~\eqref{eq:feasible} with maximum block size~$C_{max}$:
\begin{equation}\label{eq:tipless_obj}
\sum_{t \in M \,:\, \bidt \ge r+\delta} \alloct^{\delta}(\history,M) \cdot
(\delta - \mu) \cdot s_t,
\end{equation}
or equivalently, in the intended regime with $\delta \ge \mu$, to
include a largest-possible subset of eligible
transactions;\footnote{Ties between subsets are intended to be broken
  consistently and independently of transactions' fee caps.}

\item [(b)] the payment rule of the tipless mechanism is
\[
\pricet^{\delta}(\history,B_k) = \delta
\]
for all $t \in B_k$;

\item [(c)] the burning rule of the tipless mechanism is 
\[
\burnt^{\delta}(\history,B_k) = r
\]
for all $t \in B_k$.

\end{itemize}
\end{definition}

\section{Main Results: Which TFMs Are MMIC, DSIC, or OCA-Proof?}\label{s:results}

This section develops general tools for reasoning about the incentive
guarantees of different transaction fee mechanisms.  We use six
specific TFMs to illustrate our results:
\begin{enumerate}

\item A first-price auction (FPA), as described in
  Example~\ref{ex:fpa}.

\item A second-price-type auction (SPA), similar to
  Example~\ref{ex:spa_mmic}.  
For concreteness, we assume that the intention is for a miner to order
the outstanding transactions in nonincreasing order of bid (per unit
size) and include the largest feasible prefix of transactions.  All
included transactions pay the lowest accepted bid (per unit size),
and all revenue is passed on to the miner.

\item A first-price auction in which a~$\beta \in (0,1]$ fraction of
  the transaction fees are burned ({\em $\beta$-burn FPA})---that is, with
  $\pricet(\history,B_k) = (1-\beta)b_t$
and $\burnt(\history,B_k) = \beta b_t$ for an included
transaction with bid~$\bidt$.

\item The 1559 mechanism, as described in Definition~\ref{d:1559}.

\item The {\em $\beta$-burn 1559 mechanism}, in which a $\beta \in
  [0,1)$ fraction of the base fee revenues are burned and 
the rest
  are passed on to a block's miner---that is, with
  $\pricet(\history,B_k) = b_t - \beta r$
and $\burnt(\history,B_k) = \beta r$ for an included
transaction with bid~$\bidt$ (where~$r=\alpha(\history)$).
The intended allocation rule is analogous to that of the 1559
mechanism, with transactions chosen to maximize
\[
\sum_{t \in M \,:\, \bidt \ge r} \alloct(\history,M) \cdot
(\bidt - \beta r - \mu)   \cdot s_t.
\]

\item The tipless mechanism, as described in Definition~\ref{d:tipless}.

\end{enumerate}
Table~\ref{table:main} summarizes the implications of our results for
these six designs.

\begin{table}[h]
\begin{center}
\vspace{.5\baselineskip}
\begin{tabular}{|c|c|c|c|}\hline
TFM & MMIC? & DSIC? & OCA-proof?  \\ \hline
FPA & yes (Cor.~\ref{cor:mmic}) & no (obvious) & yes (Cor.~\ref{cor:fpa_ocaproof})\\
SPA & no (Ex.~\ref{ex:spa_mmic}) 
& almost\tablefootnote{An SPA
                                        that uses the lowest included
                                        bid as a proxy for the highest
                                        excluded bid is not generally
                                        DSIC.  However, it 
                                        is approximately DSIC (with
                                        truthful bidding always nearly
                                        maximizing bidder utility)
                                        whenever these two values are
                                        close (as one would expect in
                                        a block with hundreds of transactions).}
& almost (Rem.~\ref{rem:spa_ocaproof})\\
$\beta$-burn FPA & yes (Cor.~\ref{cor:mmic}) & no (obvious) & no (Cor.~\ref{cor:fpaburn_ocaproof})\\
1559 & yes  (Cor.~\ref{cor:mmic}) & usually (Thm.~\ref{t:1559dsic}) & yes (Cor.~\ref{cor:1559ocaproof})\\
$\beta$-burn 1559 & yes  (Cor.~\ref{cor:mmic}) & usually (Rem.~\ref{rem:1559dsic}) & no (Cor.~\ref{cor:noburn1559_ocaproof}))\\
tipless & yes (Cor.~\ref{cor:mmic}) & yes (Thm.~\ref{t:tipless_dsic}) & usually (Cor.~\ref{cor:tipless_ocaproof}+Rem.~\ref{rem:tipless_ocaproof})\\
\hline
\end{tabular}
\caption{Which of the six listed TFMs are MMIC, DSIC, or OCA-proof.}\label{table:main}
\end{center}
\end{table}

Thus, if we assess these TFMs solely according to these three types of
incentive guarantees, FPAs dominate $\beta$-burn FPAs with $\beta >
0$; the 1559
mechanism dominates these and $\beta$-burn 1559 mechanisms with $\beta
< 1$; and the 1559 mechanism and the tipless mechanism are incomparable.

\subsection{MMIC and Non-MMIC TFMs}\label{ss:mmic_results}

The MMIC condition (Definition~\ref{def:mmic}) states that a
revenue-maximizing miner should be incentivized to follow the intended
allocation rule.  Example~\ref{ex:spa_mmic} shows that not all
interesting mechanisms are MMIC, and in particular that
second-price-type auctions do not satisfy the condition.  What goes
wrong in Example~\ref{ex:spa_mmic} is that the payment collected from
one transaction depends on the other included transactions.  We call a
payment rule {\em separable} if this is not the case.
\begin{definition}[Separable Payment Rule]\label{d:separable}
A payment rule~$\prices$ is {\em separable} if, for every
on-chain history~$\history$ and block~$B_k$, 
the payment $\pricet(\history,B_k)$
of an included
transaction~$t \in B_k$ is independent of the set~$B_k - \{t\}$ of
other included transactions and their bids.
\end{definition}

For a separable payment rule~$\prices$
and a fixed transaction~$t$ (with some bid $\bidt$),
we can write~$\pricet(\history)$ for the payment that~$t$'s
creator would pay should~$t$ be included in the block~$B_k$ that
follows the history~$\history$.
(This notation is well defined by separability.)

A separable payment rule~$\prices$ suggests a corresponding {\em
  revenue-maximizing} allocation rule~$\allocs$, in which a miner
always chooses the most profitable subset of transactions.
That is, given on-chain history~$\history$ and a
mempool~$M$, the $\alloct$'s are assigned 0-1 values to maximize
\begin{equation}\label{eq:rm}
\sum_{t \in M} \alloct(\history,M) \cdot
(\pricet(\history) - \mu) \cdot s_t,
\end{equation}
subject to feasibility.

Every TFM that uses a separable payment rule and the corresponding
revenue-maximizing allocation rule is MMIC.
\begin{theorem}[Separable Payments and Revenue Maximization Imply MMIC]\label{t:mmic}
\mbox{}\\
If $\prices$ is a separable payment rule, $\allocs$ is the
corresponding revenue-maximizing allocation rule, and $\burns$ is an
arbitrary burning rule, then the TFM $\tfm$ is MMIC.
\end{theorem}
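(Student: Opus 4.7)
The plan is to unpack the myopic miner's utility under an arbitrary deviation $(F,B_k)$ and show that it can only weakly improve by (i) discarding all fake transactions and (ii) then switching to the intended allocation $\allocs(\history,M)$. Separability of~$\prices$ is what makes step (i) clean, since it guarantees that the per-transaction payment $\pricet(\history)$ of a real transaction does not depend on which other transactions the miner chooses to pack alongside it.

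Concretely, fix a history~$\history$, mempool~$M$, any fake transaction set~$F$, and any feasible $B_k \subseteq M \cup F$. Let $R = B_k \cap M$ (real transactions in the block) and $\Phi = B_k \cap F$ (fake transactions in the block). Using separability, rewrite~\eqref{eq:mmutil} as
\[
u(F,B_k) \;=\; \sum_{t \in R} \bigl(\pricet(\history)-\mu\bigr)\, s_t \;-\; \sum_{t \in \Phi} \bigl(\burnt(\history,B_k)+\mu\bigr)\, s_t.
\]
Since burning rules are nonnegative by Definition~\ref{d:burning} and $\mu \ge 0$, the second sum is nonnegative, so $u(F,B_k) \le \sum_{t \in R} (\pricet(\history)-\mu)\, s_t$. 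In particular, the miner can only lose by inserting fake transactions, which takes care of the first half of the MMIC condition.

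For the second half, observe that $R \subseteq M$ and $\sum_{t \in R} s_t \le \sum_{t \in B_k} s_t \le C$, so $R$ is itself a feasible subset of real outstanding transactions. Because $\allocs$ is the revenue-maximizing allocation rule associated with~$\prices$ (the objective~\eqref{eq:rm}), the block $B_k^* := \allocs(\history,M)$ satisfies
\[
\sum_{t \in R} \bigl(\pricet(\history)-\mu\bigr)\, s_t \;\le\; \sum_{t \in B_k^*} \bigl(\pricet(\history)-\mu\bigr)\, s_t \;=\; u(\emptyset,B_k^*),
\]
where the final equality uses that $B_k^* \subseteq M$ has no fake transactions. Chaining the two inequalities yields $u(F,B_k) \le u(\emptyset,B_k^*)$ for every feasible deviation, which is exactly the MMIC condition of Definition~\ref{def:mmic}.

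There is no substantive obstacle here: the argument is a two-line bookkeeping exercise once separability is used to decouple real-transaction revenue from the choice of block, and once one notices that fake transactions carry a strictly nonnegative cost (burn plus marginal cost) and contribute no offsetting revenue to the miner. The only thing to be careful about is the accounting---splitting~$B_k$ into its real and fake parts and making sure the $\mu \sum_{t \in B_k} s_t$ term is distributed correctly across the two pieces---which is what the display above does.
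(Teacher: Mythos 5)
Your proof is correct and follows essentially the same route as the paper's: decompose the block into real and fake transactions, use separability to rewrite the revenue term as $\sum_{t}(\pricet(\history)-\mu)s_t$, observe that fake transactions only add a nonnegative cost, and then invoke the definition of the revenue-maximizing allocation rule. The extra bookkeeping you include (noting that $R=B_k\cap M$ is itself feasible, and chaining the two inequalities explicitly) is just a more spelled-out version of the paper's argument.
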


\begin{proof}
Fix an on-chain history~$\history$, a mempool~$M$, and a
marginal cost~$\mu \ge 0$.
By Definition~\ref{d:1559},
myopic miner utility~\eqref{eq:mmutil} equals
\begin{equation}\label{eq:mmic_proof}
u(F,B_k) := 
\underbrace{\sum_{t \in B_k \cap M} (\pricet(\history) -\mu) \cdot
  s_t}_{\text{revenue less marginal costs}}
-
\underbrace{\sum_{t \in B_k \cap F} (\mu +
  \burnt(\history,B_k)) \cdot
  s_t}_{\text{fake transaction costs}},
\end{equation}
where~$B_k$ denotes the transactions included by the miner and~$F$ the
fake transactions that it creates.  Included fake transactions can only
increase the second term (as $\mu$ and $\burns$ are nonnegative)
while leaving the
first unaffected (because~$\prices$ is separable), and so a myopic
miner can be assumed to include only real transactions in~$B_k$.  In
this case, myopic miner utility equals
\[
\sum_{t \in B_k} (\pricet(\history)-\mu) \cdot s_t,
\]
which is identical to the quantity~\eqref{eq:rm} maximized by
the revenue-maximizing allocation rule.  Thus, myopic miner utility is
maximized by following the allocation rule and setting $B_k$ equal to $\allocs(\history,M)$.
\end{proof}

FPAs use the separable payment rule with $\pricet(\history) = \bidt$
and the corresponding revenue-maximizing rule (see~\eqref{eq:fpa_obj2}).
$\beta$-burn FPAs use the separable rule $\pricet(\history) =
(1-\beta)\bidt$ and the same revenue-maximizing allocation rule.
The 1559, $\beta$-burn 1559, and tipless mechanisms use the separable
payment rules given by 
$\pricet(\history) = \bidt-r$,
$\pricet(\history) = \bidt-\beta r$,
and $\pricet(\history) = \delta$, respectively, where~$r$ denotes the
1559 mechanism's current base fee (which, crucially, depends only
on~$\history$) and~$\delta$ is the hard-coded tip in the tipless mechanism.
By definition, all three mechanisms use the corresponding
revenue-maximizing allocation rules.
Applying Theorem~\ref{t:mmic}:
\begin{corollary}[Five MMIC TFMs]\label{cor:mmic}
FPAs, $\beta$-burn FPAs, the 1559 mechanism, the $\beta$-burn 1559
mechanism, and the tipless mechanism are all MMIC.
\end{corollary}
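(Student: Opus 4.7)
The plan is to deduce the corollary directly from Theorem~\ref{t:mmic} by verifying, for each of the five mechanisms, (i) that its payment rule is separable in the sense of Definition~\ref{d:separable}, and (ii) that its intended allocation rule coincides with the revenue-maximizing rule~\eqref{eq:rm} associated to that payment rule. Once these two checks are in place for a given mechanism, Theorem~\ref{t:mmic} immediately yields MMIC.

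First I would walk through the five payment rules and read off, in each case, the per-transaction payment $\pricet(\history)$ that depends only on the on-chain history and the bid of~$t$ itself: for the FPA it is $\bidt$; for the $\beta$-burn FPA it is $(1-\beta)\bidt$; for the 1559 mechanism it is $\bidt - r$; for the $\beta$-burn 1559 mechanism it is $\bidt - \beta r$; and for the tipless mechanism it is the hard-coded $\delta$. The crucial observation enabling separability for the two 1559 variants is that $r = \alpha(\history)$ is a function of past blocks only (item~2 of EIP-1559's description), and so does not depend on $B_k - \{t\}$. For the tipless mechanism, $\delta$ is a protocol constant. Thus in all five cases the payment of a given included transaction is independent of which other transactions share the block, which is exactly separability.

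Next I would verify that each mechanism's intended allocation rule is the revenue-maximizing rule corresponding to its separable payment rule. For the FPA and $\beta$-burn FPA, the allocation rule maximizes $\sum_t \alloct \cdot (\bidt - \mu) \cdot s_t$ (see~\eqref{eq:fpa_obj2}), and scaling the winning bids by $(1-\beta)$ does not alter the argmax when $\mu=0$; for $\mu > 0$ one must be slightly careful, but the $\beta$-burn FPA can be stipulated to maximize $\sum_t \alloct \cdot ((1-\beta)\bidt - \mu) \cdot s_t$, which matches~\eqref{eq:rm} with $\pricet = (1-\beta)\bidt$. For the 1559 mechanism this is immediate from~\eqref{eq:1559_obj}; for the $\beta$-burn 1559 mechanism it is built into its definition in the list; and for the tipless mechanism this is~\eqref{eq:tipless_obj}, which maximizes a largest feasible set of eligible transactions weighted by $(\delta-\mu)\cdot s_t$ exactly as~\eqref{eq:rm} prescribes with $\pricet = \delta$.

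The main potential obstacle is making sure I handle the small mismatch that the nominal FPA allocation rule~\eqref{eq:fpa_obj2} is phrased for \emph{all} transactions whereas the 1559/tipless mechanisms restrict to eligible transactions (those with $\bidt \ge r$ or $\bidt \ge r + \delta$); I will note that ineligible transactions have nonpositive contribution to~\eqref{eq:rm} under the respective payment rules, so maximizing over eligible transactions only is equivalent. With these observations in place, the conclusion is a one-line invocation of Theorem~\ref{t:mmic} for each of the five TFMs.
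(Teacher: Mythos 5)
Your proposal is correct and follows essentially the same route as the paper: the paper's own argument is exactly a verification that each of the five payment rules is separable (with $\pricet(\history)$ equal to $\bidt$, $(1-\beta)\bidt$, $\bidt - r$, $\bidt - \beta r$, and $\delta$, respectively, noting that $r=\alpha(\history)$ depends only on history) and that each intended allocation rule is the corresponding revenue-maximizing rule, followed by an invocation of Theorem~\ref{t:mmic}. Your extra care about the $\mu>0$ case for the $\beta$-burn FPA and about the eligibility restriction in the 1559/tipless rules is a slightly more thorough treatment of details the paper passes over, but it does not change the argument.
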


\begin{remark}[Non-Separable MMIC TFMs]
There exist MMIC TFMs that do not employ a separable payment rule.
For example, define~$\burns$ as the all-zero function,
$\pricet(\history,B_k) = \bidt$ if~$t$ is the highest-bidding
transaction included in~$B_k$ (with ties broken arbitrarily but
consistently), and $\pricet(\history,B_k) = 0$ otherwise.
Define~$\allocs$ as the rule that includes only the highest-bidding
transaction in the mempool (or, if every bid is less than~$\mu$,
includes nothing).  This TFM is MMIC even though~$\prices$ is not
separable.  The monopolistic price mechanism proposed by Lavi et
al.~\cite{LSZ19} and discussed in Section~\ref{s:intro} is another
example. 
Characterizing the MMIC TFMs is an interesting open
research question.
\end{remark}

\subsection{DSIC and Non-DSIC TFMs}\label{ss:dsic_results}

The DSIC condition (Section~\ref{ss:dsic}) states that every
transaction creator should always have a dominant bidding strategy---a
bid that maximizes their utility~\eqref{eq:userutil}, no matter
what the bids of others.
The optimal bid in an FPA or $\beta$-burn FPA
depends on others' bids, so these TFMs are not DSIC.

The tipless mechanism is an example of a DSIC TFM.
\begin{theorem}\label{t:tipless_dsic}
The tipless mechanism is DSIC.
\end{theorem}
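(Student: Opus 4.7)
The plan is to show that, from the perspective of any individual user, the tipless mechanism reduces to a posted-price mechanism with per-size price $r+\delta$, where $r=\alpha(\history)$; once this reduction is in place, DSIC follows from the standard analysis of posted prices. I would exhibit truthful bidding ($c_t = v_t$) as a dominant strategy.

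First I would fix an on-chain history $\history$ (determining the base fee $r$), a transaction $t$ with valuation $v_t$ and size $s_t$, and an arbitrary profile of fee caps submitted by the other transactions in the mempool. The crux is the following pair of observations about how $t$'s chosen fee cap $c_t$ affects its outcome. Observation (i): the induced bid $\bidt=\min\{r+\delta,c_t\}$ enters the allocation rule only through the eligibility condition $c_t \ge r+\delta$; within the set of eligible transactions, Definition~\ref{d:tipless}(a) selects a largest feasible subset with ties broken consistently and independently of fee caps, so conditional on $t$ being eligible, whether $t$ is included is determined by the other transactions' fee caps and the fixed tiebreaking rule—not by $c_t$ itself. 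Observation (ii): whenever $t$ is included, its total per-size cost is $\pricet^{\delta}+\burnt^{\delta}=\delta+r$, independent of $c_t$.

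Combining these, the utility of transaction $t$'s creator (Definition~\ref{def:userutil}) as a function of $c_t$ takes one of two values: it is $0$ whenever $c_t<r+\delta$, and it is $W\cdot(v_t-r-\delta)\cdot s_t$ whenever $c_t\ge r+\delta$, where $W\in\{0,1\}$ is a quantity determined entirely by the other transactions' fee caps and the tiebreaking rule. Now I would split into cases. If $v_t\ge r+\delta$, then $(v_t-r-\delta)\cdot s_t\ge 0$, so bidding $c_t=v_t\ge r+\delta$ is weakly better than any bid below $r+\delta$, and all bids at or above $r+\delta$ yield the same utility. If $v_t<r+\delta$, then $(v_t-r-\delta)\cdot s_t<0$, so bidding $c_t=v_t<r+\delta$ is weakly better than any eligible bid (and strictly better when $W=1$). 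In either case, the truthful bid $c_t=v_t$ maximizes~\eqref{eq:userutil} regardless of the other users' fee caps, establishing DSIC.

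The only non-routine part of the argument is the fee-cap-independence of inclusion conditional on eligibility, which rests on the tiebreaking stipulation following~\eqref{eq:tipless_obj}; without that stipulation a miner could, in principle, break ties using fee caps and thereby give users a strategic incentive to inflate $c_t$ above $r+\delta$. Everything else is a straightforward posted-price calculation.
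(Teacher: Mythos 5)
Your proposal is correct and follows essentially the same route as the paper's proof: both reduce the mechanism to a posted price of $r+\delta$ per unit size, invoke the consistent fee-cap-independent tie-breaking to argue that a user's cap affects only eligibility, and split on whether $\valt$ is above or below $r+\delta$ (your truthful fee cap $c_t=\valt$ induces exactly the paper's dominant bid $\min\{r+\delta,\valt\}$). The only cosmetic difference is that the paper also notes in passing the degenerate case $\delta<\mu$, in which nothing is ever included and DSIC holds trivially.
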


\begin{proof}
Fix an on-chain history $\history$ and corresponding
base fee $r = \alpha(\history)$.
The claim is that the bidding strategy
$\sigma(\valt) = \min\{ r+\delta, \valt \}$
is a dominant strategy for every bidder,
where~$\delta$ denotes the value of the hard-coded tip in the tipless
mechanism.\footnote{In the mechanism's implementation
  (Section~\ref{s:1559}), this is
  precisely the bid induced by a truthfully reported fee cap (with
  $c_t = \valt$) and the current base fee~$r$.}

Fix a set~$T$ of transactions with valuations~$\vals$ and a
transaction~$t \in T$.
Suppose $t$'s creator bids
$\bidt = \sigma(\valt) = \min\{ r+\delta, \valt \}$.
If $t$ is a low-value transaction (with $\valt < r + \delta$), every
alternative bid $\hat{\bid}_t$ either has no effect on $t$'s utility or
leads to~$t$'s inclusion in the block; the latter occurs only when
$\hat{\bid}_t \ge r+\delta$, in which case the creator's utility drops
from~0 to $(\valt -\hat{\bid}_t) \cdot s_t < 0$.  
For a high-value transaction (with $\valt \ge r + \delta$), every
alternative  bid $\hat{\bid}_t$ either has no effect on the creator's
utility or, if the alternative bid triggers~$t$'s exclusion, drops its
utility from a nonnegative number $(\valt - r- \delta) \cdot s_t \ge
0$ to~0.\footnote{Recall that the tipless mechanism's allocation rule
  includes a
largest-possible subset of the eligible transactions (i.e.,
transactions~$t$ with $\bidt \ge r +\delta$), breaking ties
between subsets in a consistent and bid-independent way.  (Unless
$\delta < \mu$, in which case the mechanism never includes any
transactions and is trivially DSIC.)}
We conclude that the bid $\sigma(\valt) = \min\{ r+\delta, \valt \}$
is always utility-maximizing for~$t$'s creator.
\end{proof}

The 1559 mechanism is not DSIC in general, as for the special case of
a zero base fee it is equivalent to an FPA.  
However, given that the base fee is automatically adjusted over time
in response to excess demand, one might expect that, in a typical
block, the base fee is sufficiently high to exclude all but a
reasonable number of transactions in the mempool.
Can we at least argue that, if the base fee ``does its
job,'' then the 1559 mechanism is DSIC?\footnote{A natural conjecture
  is that, for an appropriately tuned base fee update rule~$\alpha$,
  excessively low base fees should arise only in short transitory
  periods while waiting for the base fee to catch up to a large and
  sudden demand spike.  Preliminary investigations by Monnot~\cite{monnot}
  with synthetic data provide some initial support for this conjecture.\label{foot:alpha}}

\begin{definition}[Excessively Low Base Fee]\label{def:low}
Let~$\mu$ denote the marginal cost.
In the 1559 mechanism, a base fee~$r$ is {\em excessively low} for a
set~$T$ of transactions with valuations $\vals$ if the demand at
price~$r+\mu$ exceeds the maximum block size~$C_{max}$:
\begin{equation}\label{eq:low}
\underbrace{\sum_{t \in M \,:\, \valt \ge r+\mu} s_t}_{\text{demand at
    price $r+\mu$}} > C_{max}.
\end{equation}
\end{definition}

Are excessively low base fees the only obstruction to
the DSIC property?  Not quite.  The issue is that if, for
whatever reason, users choose to overbid (with $\bidt > \valt$), a
base fee may act as if it is excessively low (with respect to the
reported bids) even though it is not (with respect to the true
valuations).
The next result proves that these are the only two
obstructions to achieving DSIC---without them, 
the 1559 mechanism effectively acts as a posted price
mechanism with a price (per unit size) equal to the base fee~$r$
plus the miner marginal cost~$\mu$.

\begin{theorem}[The 1559 Mechanism Is Usually DSIC]\label{t:1559dsic}
Fix an on-chain history $\history$ and corresponding
base fee $r = \alpha(\history)$, a marginal cost~$\mu$,
and a set~$T$ of transactions with valuations~$\vals$.
If~$r$ is not excessively low for~$T$ and transaction creators cannot
overbid, the bidding strategy
$\bidt = \sigma(\valt) = \min\{ r+ \mu, \valt \}$
is a dominant strategy for every bidder.\footnote{In the mechanism's implementation
  (Section~\ref{s:1559}), this is
  precisely the bid induced by a truthfully reported fee cap (with
  $c_t = \valt$), a tip of~$\mu$, and a current base fee of~$r$.}
\end{theorem}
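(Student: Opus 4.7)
My plan is to mimic the structure of the proof of Theorem~\ref{t:tipless_dsic}. I fix an on-chain history~$\history$, base fee $r = \alpha(\history)$, marginal cost~$\mu$, and a single bidder~$t$ with valuation~$\valt$, and I show that $\sigma(\valt) = \min\{r+\mu,\valt\}$ weakly dominates every alternative bid~$\bidt$, regardless of the other users' (non-overbidding) bids. The first step is to unpack the intended allocation rule~\eqref{eq:1559_obj}: because the miner maximizes $\sum_{t\in M} \alloct \cdot (\bidt - r - \mu)\cdot s_t$ subject to a block capacity of~$2C$, a transaction with $\bidt < r+\mu$ contributes non-positively (and is outright ineligible when $\bidt < r$), a transaction with $\bidt > r+\mu$ contributes strictly positively, and a transaction with $\bidt = r+\mu$ ties at zero. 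I adopt the tie-breaking convention used in the proof of Theorem~\ref{t:tipless_dsic}, namely that the miner selects a largest-possible revenue-maximizing subset.

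Next, I combine the two hypotheses of the theorem to conclude that room in the block is never a binding constraint. Since no user overbids, every transaction with $\bidt \ge r+\mu$ also satisfies $\valt \ge r+\mu$, so by Definition~\ref{def:low} the total size of such transactions is at most~$2C$. Thus every eligible transaction fits into the block, and the tie-breaking convention ensures that all of them --- including marginal bidders at $\bidt = r+\mu$ --- end up in the miner's allocation.

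A case analysis on~$\valt$ then closes the argument. If $\valt < r+\mu$, then $\sigma(\valt)=\valt$ yields exclusion and utility~$0$; any alternative $\bidt \in [0, r+\mu)$ is likewise excluded, and any $\bidt \ge r+\mu$ triggers inclusion with utility $(\valt - \bidt)\cdot s_t \le 0$, so $\sigma$ is at least as good. If $\valt \ge r+\mu$, then $\sigma(\valt)=r+\mu$ gives inclusion and utility $(\valt - r-\mu)\cdot s_t \ge 0$; any deviation to $\bidt < r+\mu$ gives exclusion (utility~$0$), while any deviation to $\bidt > r+\mu$ preserves inclusion but drops the utility to $(\valt-\bidt)\cdot s_t < (\valt - r-\mu)\cdot s_t$. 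I expect the main obstacle to be the handling of the tie at $\bidt = r+\mu$: under an adversarial tie-breaking rule, bidder~$t$ could be excluded under $\sigma$ and would profit from an $\epsilon$-overbid, so the argument genuinely relies on the favorable tie-breaking convention. The no-overbidding hypothesis on the other users is equally essential: without it, the demand perceived by the miner through bids could exceed the demand at price $r+\mu$ through valuations, invalidating the $2C$ bound on eligible transactions that underpins the inclusion argument.
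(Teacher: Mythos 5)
Your proof is correct and follows essentially the same route as the paper's: the same case split on whether $\valt < r+\mu$ or $\valt \ge r+\mu$, with the no-overbidding and not-excessively-low hypotheses combined exactly as in the paper's display~\eqref{eq:1559uic} to show that capacity never binds for eligible transactions. Your explicit handling of the tie at $\bidt = r+\mu$ (where a transaction contributes zero to the objective~\eqref{eq:1559_obj}, so inclusion requires a favorable tie-breaking convention) is a detail the paper's proof asserts without comment, and making it explicit is a genuine, if minor, improvement in rigor.
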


\begin{proof}
Fix a transaction~$t \in T$, with valuation~$\valt$.
Suppose first that $\valt < r + \mu$.
The objective~\eqref{eq:1559_obj} of the 1559 allocation rule
prescribes including only transactions~$s \in T$ with $\bid_s
\ge r + \mu$.  
If~$t$'s creator bids $\bidt = \min\{ r+ \mu, \valt \}
= \valt$, the transaction will be excluded from the block and the
resulting utility will be~0.  Every alternative bid~$\hat{\bid}_t$ either
leads to the same outcome or, if it results in $t$'s inclusion in the
block, leads to negative utility (at most $\valt - (r+\mu) < 0$).

Now suppose that $\valt \ge r + \mu$.
Because transaction creators cannot overbid,
the transactions~$w \in T$ with $\bid_w \ge r + \mu$
are a subset of the transactions~$w \in T$ with $\val_w \ge r + \mu$.
Thus, because~$r$ is not excessively low for~$T$, there is room for
all of these transactions (no matter what~$\bidt$ is):
\begin{equation}\label{eq:1559uic}
\underbrace{\sum_{w \in T \,:\, \bid_w \ge r + \mu}
  s_w}_{\text{total size of included txs}}
\quad \le 
\underbrace{\sum_{w \in T \,:\, \val_w \ge r + \mu}
  s_w}_{\text{demand at price $r + \mu$}} \le \quad C_{max}.
\end{equation}
If~$t$'s creator bids $\bidt = \min\{ r+ \mu, \valt \}
= r+\mu$, the transaction will be included in the block and the
resulting utility will be~$\valt - (r+\mu) \ge 0$.  Every alternative
bid either leads to~$t$'s exclusion (resulting in utility~0) or
to~$t$'s inclusion at a price higher than~$r+\mu$.
We conclude that there is no alternative bid for~$t$ that increases
its creator's utility, and hence $\sigma(\valt) = \min\{ r+ \mu, \valt
\}$ is a dominant bidding strategy.
\end{proof}

\begin{remark}[Alternative Interpretation of Theorem~\ref{t:1559dsic}]
The proof of Theorem~\ref{t:1559dsic} shows that, whenever the base
fee is not excessively low and all other transaction creators do not
overbid (e.g., on account of overbidding being a dominated strategy),
a best response of a transaction creator with valuation
$\valt$ is to bid $\sigma(\valt) =\min\{r+\mu,\valt\}$. 
Thus, when the base fee is not excessively low, the outcome in
which all transaction creators bid according to the strategy~$\sigma$
constitutes an ex post Nash equilibrium.
\end{remark}

\begin{remark}[The $\beta$-Burn 1559 Mechanism]\label{rem:1559dsic}
The $\beta$-burn 1559 and 1559 mechanisms make the same allocation
decisions and charge the same total price (i.e., miner payment plus
burn) for included transactions, and differ only in how the payments
by users are directed.  Thus the two mechanisms are identical from the
user perspective, and Theorem~\ref{t:1559dsic} carries over
immediately to all $\beta$-burn 1559 mechanisms.
\end{remark}

\subsection{OCA-Proof and Non-OCA-Proof TFMs}\label{ss:ocaproof_results}

The OCA-proofness condition (Definition~\ref{def:ocaproof}) requires
the existence of a canonical and individually rational on-chain
outcome that cannot be Pareto improved by any off-chain agreement
(specifying the on-chain allocation and bids, and the off-chain
transfers between users and the block's miner).

Because OCAs can specify arbitrary transfers, we can characterize
OCA-proofness in terms of a surplus-maximization property.  The next
definition is the sum of the utility functions of the miner and all
the creators of pending transactions.
\begin{definition}[Joint Utility]
For an on-chain history~$\history$ and mempool~$M$, 
the {\em joint utility} of the miner and the creators of transactions
in~$M$ for a block~$B_k$ is
\begin{equation}\label{eq:jointutil}
\sum_{t \in B_k} \left( \valt - q_t(\history,B_k) - \mu
\right) \cdot s_t.
\end{equation}
\end{definition}

From the perspective of a coalition of users and a miner, on-chain and
off-chain payments from the users to the miner (the $\pricet$'s
specified by the TFM and the $\tau_t$'s specified by the OCA) remain
within the coalition and cancel out; burned money (the
$\burnt$'s) is transferred outside the coalition and
is therefore a loss.  

\begin{prop}[OCA-Proof $\Leftrightarrow$ Joint Utility-Maximization]\label{prop:ocaproof}
A TFM $\tfm$ is\\ OCA-proof if and only if, for every on-chain
history~$\history$, there exists an individually rational bidding
strategy~$\sigma_{\history}$ such that, 
for every possible set~$T$ of outstanding transactions and
valuations~$\vals$, 
the outcome $B_k=\allocs(\history,T(\sigma_{\history}(\vals)))$ 
maximizes the joint utility~\eqref{eq:jointutil} over every possible
on-chain outcome $\allocs(\history,T(\bids))$.
\end{prop}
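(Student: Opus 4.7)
The approach exploits the quasi-linear, transferable-utility structure of the model: on-chain payments $\pricet$ and off-chain transfers $\tau_t$ move utility between the miner and users but remain inside the miner-plus-users coalition, whereas the burn $\burnt$ and the marginal cost $\mu$ represent genuine outflows. First I would verify a cancellation identity: for any block $B_k$, on-chain bid vector, and vector of off-chain transfers $\tau$, summing each user's utility from~\eqref{eq:userutil} (net of the $\tau_t \cdot s_t$ the user pays out) with the miner's utility from~\eqref{eq:mmutil} (augmented by $\sum_t \tau_t \cdot s_t$ in incoming transfers) causes all $\pricet$'s and $\tau_t$'s to drop out, leaving exactly the joint utility~\eqref{eq:jointutil}, which I will denote $J(B_k)$ for brevity. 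This identity is the linchpin of both directions.

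For the forward implication ($\Rightarrow$), I argue by contrapositive. Suppose there exist an on-chain history $\history$, a transaction set $T$ with valuations $\vals$, and an alternative bid vector $\bids'$ such that $B_k' := \allocs(\history, T(\bids'))$ satisfies $\Delta := J(B_k') - J(B_k^0) > 0$, where $B_k^0 := \allocs(\history, T(\sigma_{\history}(\vals)))$ is the canonical outcome. I then build an OCA with on-chain bids $\bids'$, allocation $B_k'$, and per-size off-chain transfers $\tau_t$ chosen so that each of the $|T|+1$ parties (all users in $T$ together with the miner) gains exactly $\Delta/(|T|+1)$ utility relative to its canonical level. Such $\tau_t$'s always exist: by the cancellation identity the post-transfer utilities sum to $J(B_k')$, and (because transfers can be negative) any assignment of per-party utilities summing to this total is realizable. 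The resulting OCA strictly Pareto-improves over the canonical outcome, contradicting OCA-proofness.

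For the reverse implication ($\Leftarrow$), I assume the canonical outcome always maximizes joint utility and derive a contradiction from any hypothetical Pareto-improving OCA with on-chain block $B_k'$. Summing the $|T|+1$ strict utility inequalities and applying the cancellation identity yields $J(B_k') > J(B_k^0)$, contradicting the maximization hypothesis; individual rationality of $\sigma_{\history}$ ensures that the no-OCA baseline utilities are well-defined benchmarks (in particular, every included user has nonnegative canonical utility, so the comparison is meaningful).

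The main obstacle deserving care is that Remark~\ref{rem:minerscontrol} permits a miner to produce a block $B_k'$ not necessarily of the form $\allocs(\history, \bids)$ for any $\bids$ on transactions in $T$ alone. I would address this by interpreting the quantifier ``every possible on-chain outcome $\allocs(\history, \bids)$'' as ranging over bid vectors on arbitrary mempools, including any fake transactions the miner injects, so that every feasible block is expressible in this form. One then observes that fake transactions (zero value, positive burn and marginal cost) weakly decrease joint utility, so the joint-utility maximum is always attained on a block of real transactions and the alternative formulations coincide.
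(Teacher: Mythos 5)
Your proof is correct and takes essentially the same route as the paper's: the transfer-cancellation identity (which the paper states informally just before the proposition), a contrapositive surplus-sharing construction of a Pareto-improving OCA for the forward direction, and summing the $|T|+1$ strict utility inequalities for the reverse. Your closing worry is largely moot---by Definition~\ref{def:oca} an OCA's block is a subset of $T$, so fake transactions never arise there; like the paper, you implicitly rely on the joint-utility maximum over bid-induced outcomes dominating all feasible blocks, which holds for the mechanisms to which the proposition is applied.
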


\begin{proof}
For the ``only if'' direction, suppose 
there exists a history~$\history$ such that, for every individually
rational bidding strategy~$\sigma_{\history}$, there is a set~$T$ of transactions
with valuations~$\vals$ and a set~$\bids$
of on-chain bids such that the joint utility
of the outcome~$\allocs(\history,T(\bids))$ is strictly larger than
that of the outcome~$\allocs(\history,T(\sigma_{\history}(\vals)))$.  
Then, there is an OCA that uses on-chain bids~$\bids$ and suitable
transfers to share the additional joint utility (relative to the
outcome~$\allocs(\history,T(\sigma_{\history}(\vals)))$) such that all
transaction creators and the miner are strictly better off.

Conversely, suppose that for every history~$\history$, there exists an
individually rational bidding strategy~$\sigma_{\history}$ such that,
for every set~$T$ of transactions and valuation vector~$\vals$, the
bid profile~$\sigma_{\history}(\vals)$ maximizes the joint utility
over all possible on-chain bids $\bids$.  Then, for every OCA, its
joint utility is at most that of
$\allocs(\history,T(\sigma_{\history}(\vals)))$; because the joint
utility is the sum of the miner and users' utility functions, some
participant's utility under the OCA is at most that
in~$\allocs(\history,T(\sigma_{\history}(\vals)))$.  Thus, no OCA Pareto
improves over the outcome with on-chain
bids~$\sigma_{\history}(\vals)$ and no off-chain transfers.
\end{proof}

Proposition~\ref{prop:ocaproof} reduces the task of verifying
OCA-proofness to checking whether or not there is a joint
utility-maximizing and individually rational bidding strategy.
We proceed to check each of our six running examples.

\begin{corollary}\label{cor:fpa_ocaproof}
FPAs are OCA-proof.
\end{corollary}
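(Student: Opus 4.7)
The plan is to invoke Proposition~\ref{prop:ocaproof} and reduce the corollary to exhibiting, for every history~$\history$, an individually rational bidding strategy~$\sigma_{\history}$ under which the induced on-chain outcome maximizes joint utility among all feasible on-chain outcomes. For FPAs, the natural candidate is truthful bidding, $\sigma_{\history}(\valt) = \valt$.

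First I would verify individual rationality. Because the FPA burning rule is identically zero and the payment rule charges an included transaction exactly its bid (Example~\ref{ex:fpa}), a truthfully bidding transaction that is included in~$B_k$ receives utility $(\valt - \valt - 0) \cdot s_t = 0$, while an excluded transaction receives utility~$0$. So truthful bidding is individually rational regardless of~$\history$.

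Next I would check joint-utility maximization. Since the FPA burns nothing, the joint utility~\eqref{eq:jointutil} of a block~$B_k$ reduces to $\sum_{t \in B_k} (\valt - \mu) \cdot s_t$. Under truthful bidding we have $\bidt = \valt$ for every~$t$, so the FPA allocation objective~\eqref{eq:fpa_obj2} coincides term-by-term with the joint utility. Hence $\allocs^f(\history, T(\sigma_{\history}(\vals)))$ is, by construction, a feasible subset maximizing $\sum_{t} (\valt - \mu) \cdot s_t$.

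Finally I would observe that any on-chain outcome $\allocs^f(\history, T(\bids))$ obtained from some other bid profile~$\bids$ is still a feasible subset of~$T$ (feasibility is enforced on all allocation-rule outputs), so its joint utility cannot exceed the maximum attained by the truthful outcome. This is the content Proposition~\ref{prop:ocaproof} requires, so the FPA is OCA-proof. The only thing to be careful about is keeping the quantifiers straight --- joint utility is compared against every feasible subset arising from any on-chain bid vector, not just the truthful one --- but this follows immediately from the fact that truthful bidding realizes the unconstrained optimum of the joint-utility objective over all feasible subsets. No further subtlety is expected.
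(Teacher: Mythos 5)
Your proposal is correct and follows essentially the same route as the paper: verify individual rationality, observe that with zero burn the FPA allocation objective~\eqref{eq:fpa_obj2} coincides with the joint utility~\eqref{eq:jointutil}, and invoke Proposition~\ref{prop:ocaproof}. The only cosmetic difference is that the paper exhibits a one-parameter family of individually rational strategies $\sigma(\valt)=\min\{\valt,\,\mu+\gamma(\valt-\mu)\}$ for $\gamma\in(0,1]$, of which your truthful strategy is exactly the $\gamma=1$ instance.
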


\begin{proof}
Fix an (irrelevant) history $\history$.
Let~$\gamma \in (0,1]$ be arbitrary, and 
define a bidding strategy~$\sigma$ by
\[
\sigma(\valt) = \min\{ \valt, \mu + \gamma \left( \valt - \mu \right) \}.
\]
For a first-price auction, this bidding strategy is individually
rational.
Consider a set~$T$ of transactions with valuations~$\vals$.
Because the burning rule $\burns$ is the all-zero function, the
objective~\eqref{eq:fpa_obj2} maximized by the allocation
rule~$\allocs^f$ with bids $\sigma(\vals)$ is identical (modulo the
scaling factor~$\gamma$) to the joint utility~\eqref{eq:jointutil}.
Thus, the joint utility of the on-chain outcome with bids
$\sigma(\vals)$ cannot be improved upon by any OCA.  
\end{proof}

\begin{remark}[SPAs Are Almost OCA-Proof]\label{rem:spa_ocaproof}
In an SPA,
the burning rule $\burns$ is the all-zero function and the
joint utility~\eqref{eq:jointutil} reduces to the social welfare
$\sum_{t \in B_k} (\valt - \mu)\cdot s_t$.  
Had we defined an SPA using the welfare-maximizing allocation rule,
it would be an OCA-proof transaction fee mechanism (by
Proposition~\ref{prop:ocaproof}, using the identity bidding strategy~$\sigma(\valt)=\valt$).  
For our definition of SPAs based on a greedy heuristic allocation rule,
the outcome may have slightly less than the maximum-possible social
welfare, and so (by Proposition~\ref{prop:ocaproof}) there may be an
opportunity for a (small) Pareto improvement.
\end{remark}

The proof that FPAs are OCA-proof
(Corollary~\ref{cor:fpa_ocaproof}) can be extended to the 1559
mechanism.

\begin{corollary}\label{cor:1559ocaproof}
The 1559 mechanism is OCA-proof.
\end{corollary}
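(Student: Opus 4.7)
The plan is to invoke Proposition~\ref{prop:ocaproof}: it suffices to exhibit, for every history~$\history$, an individually rational bidding strategy~$\sigma_{\history}$ under which the 1559 allocation rule always produces a joint-utility-maximizing block. The argument will parallel the proof of Corollary~\ref{cor:fpa_ocaproof}, with the extra wrinkle that the base fee $r=\alpha(\history)$ enters both the allocation objective~\eqref{eq:1559_obj} and (through the burning rule) the joint utility~\eqref{eq:jointutil}, which for the 1559 mechanism equals $\sum_{t \in B_k}(\valt - r - \mu)\cdot s_t$.

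Fixing~$\history$ and $r=\alpha(\history)$, I would pick an arbitrary $\gamma\in(0,1]$ and define
\[
\sigma_{\history}(\valt)=\min\bigl\{\valt,\;r+\mu+\gamma(\valt-r-\mu)\bigr\}.
\]
The second argument of the $\min$ rearranges to $(1-\gamma)(r+\mu)+\gamma\valt$, and a quick case analysis shows that $\sigma_{\history}(\valt)\le\valt$ always. Since the total per-unit-size charge to a user in the 1559 mechanism is $\pricet^{\nine}+\burnt^{\nine}=(\bidt-r)+r=\bidt$, the bound $\bidt\le\valt$ immediately gives nonnegative utility, so $\sigma_{\history}$ is individually rational.

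The main step is matching the two maximization problems. Under $\sigma_{\history}$, transactions with $\valt < r+\mu$ bid strictly below $r+\mu$ and thus contribute nonpositively to~\eqref{eq:1559_obj}, so the allocation rule excludes them; transactions with $\valt \ge r+\mu$ bid $\bidt = r+\mu+\gamma(\valt-r-\mu)$, giving $\bidt-r-\mu=\gamma(\valt-r-\mu)\ge 0$. Hence the per-transaction contribution to~\eqref{eq:1559_obj} equals $\gamma$ times the per-transaction contribution to the joint utility~\eqref{eq:jointutil}, so maximizing the former subject to the $2C$ capacity constraint coincides with maximizing the latter. Applying Proposition~\ref{prop:ocaproof} completes the proof.

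The only subtlety is the regime change at $\valt = r+\mu$: both individual rationality and the proportionality $\bidt - r - \mu = \gamma(\valt - r - \mu)$ need to be verified on both sides of this threshold. Beyond this bookkeeping there is no real obstacle, because the design of the 1559 mechanism---burning the base fee rather than transferring it to the miner---is exactly what makes the on-chain allocation objective~\eqref{eq:1559_obj} align with joint utility once truthful-like bids are used.
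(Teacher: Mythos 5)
Your proof is correct and follows essentially the same route as the paper's: the same bidding strategy $\sigma_{\history}(\valt)=\min\{\valt,\,r+\mu+\gamma(\valt-r-\mu)\}$, the same individual-rationality check via $\pricet^{\nine}+\burnt^{\nine}=\bidt\le\valt$, and the same observation that under these bids the allocation objective~\eqref{eq:1559_obj} agrees (up to the positive factor~$\gamma$ on the relevant transactions) with the joint utility~\eqref{eq:jointutil}, so Proposition~\ref{prop:ocaproof} applies. Your explicit handling of the threshold $\valt=r+\mu$ is a slightly more careful rendering of the paper's ``identical modulo the scaling factor~$\gamma$'' claim, but it is not a different argument.
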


\begin{proof}
Fix a history $\history$ and corresponding base fee
$r=\alpha(\history)$.
Let~$\gamma \in (0,1]$ be arbitrary, and 
define a bidding strategy~$\sigma$ by
\[
\sigma(\valt) = \min\{ \valt, \mu + r + \gamma \left( \valt - \mu - r \right) \}.
\]
This bidding strategy is individually rational.
Consider a set~$T$ of transactions with valuations~$\vals$.
The
objective~\eqref{eq:1559_obj} maximized by the allocation
rule~$\allocs^{\nine}$ with bids $\sigma(\vals)$ is identical (modulo
the scaling factor~$\gamma$) to the joint utility~\eqref{eq:jointutil}.
Thus, the joint utility of the on-chain outcome with bids
$\sigma(\vals)$ cannot be improved upon by any OCA, and
Proposition~\ref{prop:ocaproof} then implies that the mechanism is
OCA-proof.
\end{proof}

The tipless mechanism fails OCA-proofness in the same regime in which
the 1559 mechanism loses its DSIC property, the regime of an
excessively low base fee (Definition~\ref{def:low}).
In effect, the tipless mechanism retains DSIC in this regime by
disallowing bidders to differentiate themselves through high bids, and
fails OCA-proofness for the same reason.
\begin{corollary}[The Tipless Mechanism Is Usually OCA-Proof]\label{cor:tipless_ocaproof}
Fix an on-chain\\ history $\history$ and corresponding
base fee $r = \alpha(\history)$, and a set~$T$ of
transactions with valuations~$\vals$ such that~$r$ is not excessively
low.
The tipless mechanism, with hard-coded tip~$\delta$ equal to the
marginal cost~$\mu$, is OCA-proof.
\end{corollary}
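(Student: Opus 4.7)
The plan is to apply Proposition~\ref{prop:ocaproof}, which reduces OCA-proofness to exhibiting an individually rational bidding strategy~$\sigma_{\history}$ whose induced on-chain outcome maximizes the joint utility~\eqref{eq:jointutil} over every achievable on-chain outcome. The natural candidate is the dominant strategy identified in the proof of Theorem~\ref{t:tipless_dsic}, specialized to $\delta=\mu$:
\[
\sigma_{\history}(\valt) = \min\{r + \mu, \valt\}.
\]
Individual rationality is immediate, as it already is in Theorem~\ref{t:tipless_dsic}: low-value transactions ($\valt < r+\mu$) bid below $r+\delta$ and are excluded (utility~$0$), while high-value transactions pay the reserve plus tip $r+\delta = r+\mu \le \valt$ and receive nonnegative utility.

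Next I would compute the outcome produced by~$\sigma_{\history}$ on a profile~$\vals$. The eligible transactions (bid at least $r+\delta$) are exactly $\{t \in T : \valt \ge r+\mu\}$; since~$r$ is not excessively low, inequality~\eqref{eq:low} fails, so the total size of these transactions is at most~$2C$. The tipless allocation rule includes a largest feasible subset of eligibles, and here all of them fit, so the resulting block is $B_k^\star = \{t \in T : \valt \ge r+\mu\}$.

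The key remaining step is to show that $B_k^\star$ maximizes joint utility among all on-chain outcomes $\allocs^\delta(\history,T(\bids))$ achievable by varying the bid profile $\bids$. Because the tipless burning rule is $q_t(\history,B_k) = r$ for each included transaction, the joint utility~\eqref{eq:jointutil} of any block $B_k$ simplifies to
\[
\sum_{t \in B_k} (\valt - r - \mu) \cdot s_t.
\]
For any feasible on-chain block $B_k$, dropping the transactions with $\valt < r+\mu$ only increases this sum (a sum of nonpositive terms is removed), and then adding in every remaining transaction with $\valt \ge r+\mu$ only increases it further (nonnegative terms are added, and all of them fit by the non-excessively-low hypothesis). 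Hence the joint utility of $B_k$ is bounded above by that of~$B_k^\star$. Invoking Proposition~\ref{prop:ocaproof} completes the argument.

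The main obstacle I anticipate is a bookkeeping subtlety rather than a conceptual one: I must be careful that the phrase ``every possible on-chain outcome'' in Proposition~\ref{prop:ocaproof} really ranges over outcomes of the form $\allocs^\delta(\history,T(\bids))$, so that I only need to compare against blocks in which each included transaction has bid at least $r+\delta$. This is exactly why the tipless mechanism can force the contribution of each included transaction in the joint-utility sum to be $(\valt - r - \mu)\cdot s_t$ with $\valt$ unconstrained, which is precisely the quantity dominated by the two-step argument above.
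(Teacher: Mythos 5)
Your proposal is correct and follows essentially the same route as the paper's proof: the same bidding strategy $\sigma(\valt)=\min\{r+\mu,\valt\}$, the same observation that the non-excessively-low hypothesis lets all eligible transactions fit in a block of size $2C$, and the same reduction via Proposition~\ref{prop:ocaproof} to checking that this block maximizes $\sum_{t\in B_k}(\valt-r-\mu)\cdot s_t$. Your extra drop-then-add argument in fact bounds the joint utility over all feasible blocks, which subsumes the bookkeeping worry you raise at the end.
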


\begin{proof}
Define an (individually rational) bidding strategy~$\sigma$ by
$\sigma(\valt)  = \min \{ r + \delta, \valt \} = \min \{ r + \mu,
\valt \}$.
The joint utility~\eqref{eq:jointutil} of the miner and users for the
outcome~$B_k$ is
\begin{equation}\label{eq:tipless}
\sum_{t \in B_k} ( \valt - r - \mu) \cdot s_t.
\end{equation}
Because~$r$ is not excessively low for~$T$, the total size of the
transactions~$t$ with $\valt \ge r + \mu = r+\delta$ is at most the
maximum block size~$C_{max}$.  The joint utility~\eqref{eq:tipless} is
therefore maximized by including these (and only these) transactions,
which is precisely the outcome under the bids~$\sigma(\vals)$.
Proposition~\ref{prop:ocaproof} then implies that, 
under the assumption of a base fee that is not excessively low,
the tipless mechanism is OCA-proof.
\end{proof}

\begin{remark}[The Tipless Mechanism Is Not Generally OCA-Proof]\label{rem:tipless_ocaproof}
The tipless mechanism is not generally OCA-proof when the base fee~$r$
is excessively low (even with $\delta=\mu$).  For consider an
arbitrary individually rational bidding strategy~$\sigma$.  If
$\sigma(v) < r+\delta$ for
some~$v > r+\delta$, a collection of transaction creators all with
valuation~$v$ would be better off in an OCA with a miner than
bidding~$\sigma(v)$ on-chain (which would lead to the automatic
exclusion of all the transactions).  On the other hand, if $\sigma(v) \ge r+\delta$ for all
$v > r+\delta$, consider a collection of transactions~$T$ with total
size~$\sum_{t \in T} s_t$ bigger than the maximum block size~$C_{max}$
and with creators that have
distinct valuations, all bigger than $r+\delta$.  The intended
allocation rule then instructs the miner to include a subset of
transactions with the maximum-possible total size (subject to the
block capacity~$C_{max}$).  This will not generally be a
subset~$S \subseteq T$ of transactions that maximizes the joint
utility $\sum_{t \in S} (v_t-r-\mu) \cdot s_t$ subject to the block
capacity; by Proposition~\ref{prop:ocaproof}, such an outcome can be
Pareto improved through an OCA.

The tipless mechanism is also not generally OCA-proof
when~$\delta > \mu$: a collection of transaction creators all with the
same valuation~$v \in (r+\mu,r+\delta)$ would be better off in an OCA
with a miner than bidding on-chain according to an individually
rational bidding strategy (which would lead to the automatic exclusion
of all the transactions).  If $\delta-\mu$ is small, however, the
improvement in joint surplus possible through an OCA will be
negligible.
\end{remark}

Two of the biggest differences between an FPA and the 1559 mechanism
are the switch to a posted-price-type mechanism and the burning of
transaction fees.  The OCA-proofness considerations in our final two
corollaries explain why it's important to make both changes
simultaneously, rather than either one in isolation.

\begin{corollary}\label{cor:fpaburn_ocaproof}
For every~$\beta \in (0,1]$, a $\beta$-burn FPA is not OCA-proof.
\end{corollary}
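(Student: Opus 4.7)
The plan is to work directly from Definition~\ref{def:ocaproof}: for every individually rational bidding strategy $\sigma_\history$, I will exhibit a single-transaction instance together with an off-chain agreement that strictly Pareto-improves the on-chain outcome. The economic intuition to exploit is that under a $\beta$-burn FPA with $\beta > 0$, any on-chain bid $b$ entails a $\beta b$ deadweight loss per unit size (the burned fees), which an OCA can sidestep entirely by bidding $0$ on-chain and settling the full payment off-chain, thereby keeping every unit of value within the user--miner coalition.

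Concretely, I would fix $\beta \in (0,1]$, an (irrelevant) history~$\history$, and an arbitrary individually rational $\sigma_\history$, and consider a mempool $M = \{t\}$ consisting of a single transaction of size~$s_t$ and valuation~$v_t$ chosen large enough that $v_t > \mu/(1-\beta)$ (hence also $v_t > \mu$). The proposed OCA uses on-chain bid~$0$, allocation $S = \{t\}$ (which the miner may honor by Remark~\ref{rem:minerscontrol}), and a per-size off-chain transfer~$\tau$ from the user to the miner; under this OCA the user earns $(v_t - \tau)s_t$ and the miner earns $(\tau - \mu)s_t$.

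The central step is then a case analysis on $\sigma_\history(v_t)$. In the inclusion case $\sigma_\history(v_t) \geq \mu/(1-\beta)$, the on-chain user and miner utilities work out to $(v_t - \sigma_\history(v_t))s_t$ and $((1-\beta)\sigma_\history(v_t) - \mu)s_t$ respectively, so strict Pareto improvement reduces to picking $\tau$ in the interval $\bigl((1-\beta)\sigma_\history(v_t),\, \sigma_\history(v_t)\bigr)$; this interval has positive length $\beta\,\sigma_\history(v_t) > 0$ precisely because the mechanism burns a $\beta > 0$ fraction and $\sigma_\history(v_t) \geq \mu/(1-\beta) > 0$. In the exclusion case $\sigma_\history(v_t) < \mu/(1-\beta)$, both on-chain utilities vanish and any $\tau \in (\mu, v_t)$ yields a strict improvement; this interval is non-empty by the choice of~$v_t$. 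Either way a suitable~$\tau$ exists, so $\sigma_\history$ fails to witness OCA-proofness, and since $\sigma_\history$ was arbitrary the mechanism is not OCA-proof.

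The main obstacle I anticipate is the boundary regime $\mu = 0$: there the strategy $\sigma_\history(v) = 0$ is individually rational, the transaction can still be included under any benign bid-independent tie-breaking rule, and the resulting on-chain joint utility $v_t s_t$ already matches the best any OCA can achieve---leaving no slack for a strict Pareto improvement in the single-transaction instance. The cleanest workaround is to treat $\mu > 0$ as the standing case (as the paper motivates via uncle-rate and propagation costs in Section~\ref{s:ic}); under $\mu > 0$ the construction above goes through unconditionally.
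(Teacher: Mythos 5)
Your core idea is the right one and matches the paper's: the $\beta$-fraction burn is a deadweight loss to the user--miner coalition, which an OCA recovers by moving the payment off-chain, so no individually rational bidding strategy can be immune to a Pareto-improving side deal. You argue directly from Definition~\ref{def:ocaproof}, whereas the paper routes through Proposition~\ref{prop:ocaproof} (exhibiting a bid profile with strictly higher joint utility and letting the transfers be chosen afterward); these are equivalent in substance, and your explicit choice of $\tau$ in the interval $\bigl((1-\beta)\sigma_{\history}(\valt),\,\sigma_{\history}(\valt)\bigr)$ is a correct unpacking of that proposition. Two problems remain, though. First, a cosmetic one: your inclusion threshold $\mu/(1-\beta)$ is undefined at $\beta=1$, which the corollary covers; the fix is easy (for $\beta=1$ and $\mu>0$ the intended allocation rule never includes anything, so you are always in your exclusion case and only need $\valt>\mu$), but as written the construction does not parse there.

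The substantive gap is the case $\mu=0$, which you correctly identify as an obstacle and then decline to handle---but this is exactly the case the paper treats as primary (``Assume that $\mu=0$; the general case is similar''), and the corollary as stated must cover it. The missing idea is how to defeat the all-zero bidding strategy when $\mu=0$: your single-transaction instance genuinely fails there, since the on-chain outcome already attains the maximum joint utility $\valt s_t$. The paper's resolution is to switch instances: against $\sigma_{\history}\equiv 0$, take a mempool of transactions with distinct valuations whose total size exceeds the block capacity. The allocation rule sees only the (all-zero, hence uninformative) bids and must break ties in a consistent, bid-independent way, so for some valuation profile it selects a feasible subset that is not joint-utility-maximizing; an OCA that instead includes the welfare-maximizing subset and shares the surplus via transfers then strictly Pareto-improves. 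Restricting attention to $\mu>0$ is not a workaround for the stated corollary---it proves a weaker statement---so you need this second instance (or something like it) to close the argument.
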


\begin{proof}
Assume that $\mu =0$; the general case is similar.
Fix an (irrelevant) history $\history$, and
consider an arbitrary bidding strategy~$\sigma_{\history}$.
If~$\sigma_{\history}$ is the all-zero function, the allocation rule 
instructs the miner to include an arbitrary feasible subset of
transactions (breaking ties arbitrarily but consistently), which will
not generally be a joint utility-maximizing feasible subset.
So suppose that $\sigma_{\history}(\valt) > 0$ for some $\valt > 0$.
Then, if the mempool is a single transaction with
value~$\valt$, the joint utility of the on-chain outcome with
bid~$\sigma_{\history}(\valt)$ is $\valt - \beta \cdot
\sigma_{\history}(\valt)$.  
Because $\beta > 0$, 
the joint utility would be higher
with any smaller
on-chain bid $\bidt \in (0,\sigma_{\history}(\valt))$.
We conclude that no bidding strategy (individually rational
or otherwise) is guaranteed to maximize the joint utility, and hence
(by Proposition~\ref{prop:ocaproof}) this TFM is not OCA-proof.
\end{proof}

\begin{corollary}\label{cor:noburn1559_ocaproof}
For every~$\beta \in [0,1)$,
the $\beta$-burn 1559 mechanism is not OCA-proof.
\end{corollary}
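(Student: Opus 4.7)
The plan is to invoke Proposition~\ref{prop:ocaproof} in contrapositive form: to show that the $\beta$-burn 1559 mechanism is not OCA-proof, I would exhibit a single history~$\history$ such that no individually rational bidding strategy~$\sigma_{\history}$ succeeds in maximizing joint utility across every possible transaction set. The key conceptual point is a ``gap'' created by the coexistence of a strict reserve-price eligibility rule ($\bidt \ge r$) and an only partially burned base fee ($\beta < 1$). The joint utility contribution of an included transaction~$t$ is $(\valt - \beta r - \mu) s_t$, which is positive whenever $\valt > \beta r + \mu$. But the intended allocation rule of the $\beta$-burn 1559 mechanism refuses to include any transaction with $\bidt < r$, and individual rationality forces $\sigma_{\history}(\valt) \le \valt$ whenever bidding above $\valt$ would trigger inclusion at a negative utility. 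Hence any transaction with valuation in the open interval $(\beta r + \mu, r)$ is excluded on-chain under every individually rational strategy, even though the joint-utility-maximizing outcome would include it. Since $\beta < 1$, this interval is nonempty whenever the base fee is large enough that $r > \mu/(1-\beta)$, so I would appeal to the flexibility in choosing~$\history$ to arrange such a base fee (trivially in the case $\mu=0$, where any $r>0$ suffices).

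Concretely, I would fix such a history~$\history$, consider an arbitrary individually rational $\sigma_{\history}$, and take a single-transaction mempool $T=\{t\}$ with $\valt \in (\beta r + \mu, r)$. Under $\sigma_{\history}$ the bid $\sigma_{\history}(\valt) \le \valt < r$ leaves $t$ ineligible, so the on-chain joint utility is~$0$, whereas including~$t$ would yield joint utility $(\valt - \beta r - \mu) s_t > 0$. By Proposition~\ref{prop:ocaproof} this suffices to conclude non-OCA-proofness. I would then, for concreteness and to mirror the intuition stated in the introduction (a ``partial refund from the miner off-chain''), make the Pareto-improving OCA explicit: the creator submits on-chain bid $\bidt = r$ paired with an off-chain transfer~$\tau$ satisfying $\mu + \beta r - r < \tau < \valt - r$. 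Both bounds are negative (so $\tau$ represents a refund from the miner), and the interval is nonempty iff $\valt > \beta r + \mu$; a routine check using~\eqref{eq:userutil} and~\eqref{eq:mmutil} shows that for any such~$\tau$ both the creator and the miner are strictly better off than under the on-chain exclusion.

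The difficulty here is essentially conceptual rather than technical. The main subtlety is separating what stays inside the miner--user coalition (the fraction $(1-\beta) r$ of the base fee, which can be refunded back off-chain) from what leaks out of it (the burned fraction $\beta r$, which cannot). The argument works precisely because $\beta < 1$ makes the first component strictly positive, giving the miner enough on-chain revenue to fund the refund needed to persuade a sub-reserve-price user to bid up to~$r$. A minor bookkeeping point, which I would verify carefully, is well-posedness: the base-fee update rule~$\alpha$ must be capable of producing arbitrarily large values of~$r$ after some history, which is immediate from the qualitative description of~$\alpha$ in Section~\ref{s:1559}.
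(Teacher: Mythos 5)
Your proof is correct and follows essentially the same route as the paper's: reduce to Proposition~\ref{prop:ocaproof} and exhibit a single-transaction mempool whose valuation sits in a gap where individual rationality forces on-chain exclusion even though inclusion would yield strictly positive joint utility $(\valt-\beta r-\mu)\cdot s_t$. The only real difference is cosmetic---you take the gap to be $(\beta r+\mu,\, r)$ and therefore must arrange a history with base fee $r>\mu/(1-\beta)$, whereas the paper takes $(\beta r+\mu,\, r+\mu)$, which is nonempty for any positive base fee; your explicit construction of the Pareto-improving off-chain transfer is correct but not needed once Proposition~\ref{prop:ocaproof} is invoked.
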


\begin{proof}
Fix $\beta \in [0,1)$ and
a history $\history$ that leads to a positive base
fee~$r = \alpha(\history) > 0$.
Consider an arbitrary individually rational bidding
strategy~$\sigma_{\history}$.
Because transaction creators are charged their bids in
the $\beta$-burn 1559 mechanism, individual rationality implies that
there is no overbidding: $\sigma_{\history}(\valt) \le \valt$ for
every $\valt \ge 0$. 
Now consider a mempool with one transaction~$t$ with a
valuation~$\valt$ that is strictly between~$\beta r + \mu$ and $r +
\mu$, where~$\mu$ denotes the marginal cost; $\valt$ exists
because~$\beta < 1$ and $r > 0$.
Because~$\valt < r + \mu$,
the suggested bid~$\sigma_{\history}(\valt)$ would lead to~$t$'s
exclusion and a joint utility of~0.
The bid~$\bidt = r + \mu$ would lead to  a joint utility of
$\valt - \beta r - \mu$ which, because $\valt > \beta r + \mu$, is
strictly positive.  Proposition~\ref{prop:ocaproof} now implies that
the mechanism is not OCA-proof.
\end{proof}

\section{Discussion}\label{s:disc}

\paragraph{The 1559 mechanism vs.\ the tipless mechanism, and beyond.}
The 1559 and tipless mechanisms are the two clear winners in our study
of MMIC, DSIC, and OCA-proof transaction fee mechanisms---both provide
two of these three incentive-compatibility guarantees in all
circumstances, and all three in what is plausibly the common case of a
base fee that is reasonably well tuned to the current demand.  In a
block with an excessively low base fee (presumably due to rapidly
increasing demand), one might expect bidding wars in the 1559
mechanism (on account of the failure of DSIC) and off-chain coordination
in the tipless mechanism (due to the breakdown of OCA-proofness).  We
conjecture that there is no transaction fee mechanism that always
satisfies all three properties.\footnote{As discussed in
 Section~\ref{s:intro}, Chung and Shi~\cite{CS23} prove such an
  impossibility result under a collusion-resistance requirement different
  than OCA-proofness. The relative weakness
  of the OCA-proof condition appears to be the primary challenge
  in proving this conjecture.}
More generally, it would be
interesting to characterize the mechanisms that satisfy various
subsets and relaxations of these three properties.

Perhaps the strongest argument in favor of the tipless mechanism over
the 1559 mechanism is its simplicity.  On the user side, there are
several simplifications.  The creator of a transaction~$t$ must
specify only one parameter (a fee cap~$c_t$) rather than two (a fee
cap~$c_t$ and a tip~$\delta_t$).  The ``obvious optimal bid'' in the
tipless mechanism (setting~$c_t=\valt$) is optimal for every block and
no matter what the bids of the competing transactions.  The ``obvious
optimal bid'' in the 1559 mechanism (setting~$c_t=\valt$
and~$\delta_t=\mu$) is optimal only in blocks with neither an
excessively low base fee nor overbidding.  On the miner side, assuming
that~$\delta \ge \mu$, the revenue-maximizing strategy simplifies to
maximizing the block size while using only eligible transactions
(i.e., with bid at least~$r+\delta$).  On the negative side, the
hard-coded tip~$\delta$ in the tipless mechanism would likely need to
be adjusted over time via protocol upgrades.  Also, the tipless
mechanism's apparent DSIC advantage over the 1559 mechanism in blocks
with excessively low base fees breaks down in the presence of cartels
of colluding users or even a single user who creates multiple
transactions.  The reason is that, in a block with an excessively low
base fee, such a cartel or user could coordinate bids across
transactions to manipulate the (arbitrary but consistent) tie-breaking
rule of the mechanism.  When the base fee is not excessively low, the
1559 and tipless mechanisms are effectively unlimited-supply
posted-price mechanisms; as such, the bidding strategies in
Theorems~\ref{t:tipless_dsic} and~\ref{t:1559dsic} remain optimal for
users that control multiple transactions (assuming an additive
valuation over them) and for cartels of colluding users.

\vspace{-.1in}

\paragraph{Alternatives to burning.}
Corollary~\ref{cor:noburn1559_ocaproof} shows that, for a block's base
fee to be economically meaningful, no revenue from it can be passed on
to the block's miner.  The simplest solution is to burn all the base
fee revenues.  One appealing alternative implementation, with the same
incentive-compatibility properties, is to instead split the base fee
revenues of a block equally among the miners of the next~$\ell$
blocks.  (Here~$\ell$ is a tunable parameter; the~1559 mechanism can
be thought of as the special case in which~$\ell=0$.)  Thus, a miner
of a block receives a $1/\ell$ fraction of the sum of the base fee
revenues from the previous~$\ell$ blocks, along with all of the tips
from the current block.  While burning the base fee revenue favors
holders of the native currency (through deflation), this
``pay-it-forward'' implementation favors miners (through more
transaction fee revenue).  A second trade-off between the two
implementations concerns whether variability in demand (and hence
transaction fee revenue) translates to variability in blockchain
security or in the issuance of new currency.  With money-burning,
every block potentially changes the money supply in two ways: minting
new coins for inflationary rewards (like a block reward), and burning
the coins used to pay the base fee.  Thus the blockchain's inflation
rate would be variable and unpredictable from block to block, but
miner revenue (which effectively pays for the blockchain's
security~\cite{auer,budish}) would stay relatively constant (modulo
fluctuations in the market price of the native currency).  With the
pay-it-forward implementation, the inflation rate would be essentially
deterministic but the miner rewards (and, hence, blockchain security)
would be unpredictable (though never less than that with
money-burning).

\vspace{-.1in}

\paragraph{Multi-block time scales.}
This paper focuses on incentive issues in transaction fee mechanisms
at the time scale of a single block.  Many candidate deviations by
miners and users manifest already at this time scale.  The 1559
mechanism and its variants entangle the transaction fee mechanisms for
different blocks through a history-dependent base fee.  Dependencies
between blocks open up the possibility of miner deviations that unfold
over longer time scales.  

For example, publishing a
smaller-than-target block decreases the base fee for the next block,
potentially increasing the revenue of that block's miner.  Every miner
would happily free ride on previous miners who have sacrificed some
eligible transactions to keep block sizes and hence the base fee down,
but would rather not 
make such a sacrifice themselves.
Long-term manipulation by a cartel of miners thus 
resembles the challenge of
sustaining collusion in a repeated multi-player Prisoner's Dilemma
game: If all players cooperate (e.g., keep block sizes small to keep
the base fee low in the 1559 mechanism or the bids high in an FPA)
they are all better off, but each player has a myopic incentive to
unilaterally deviate from this strategy (e.g., when mining a block,
pack it full to maximize the immediate tip revenue, thereby
decreasing the revenue earned by miners of future blocks).  

Persistent
and harmful miner collusion has not yet been definitively observed in
a major blockchain such as Bitcoin or Ethereum.  None of the primary
transaction fee mechanisms discussed in this paper (FPAs, the 1559
mechanism, and the tipless mechanism) are obviously more vulnerable
than the others to such long-term collusion.  It would be interesting
to develop a more nuanced understanding of this issue, especially in a
regime in which a single miner or validator might control a significant
fraction (30\%, say) of the overall hashrate or stake and therefore might
plausibly benefit from non-myopic strategies.\footnote{Non-myopic
  validator strategies might be particularly effective in proof-of-stake
  blockchain protocols that, like the current version of Ethereum,
  grant validators some degree of advance knowledge about which blocks
  they will be chosen to produce.}

\end{document}